\renewcommand\todo[1]{}
\newcommand\blfootnote[1]{%
  \begingroup%
  \renewcommand\thefootnote{}\footnote{#1}%
  \addtocounter{footnote}{-1}%
  \endgroup%
}
\newcommand{\N}[1]{\ensuremath{{{#1}}}}
\newcommand{\M}[1]{\ensuremath{\mathrm{#1}}}
\renewenvironment{enumerate}{\compactenum}{\endcompactenum}
\numberwithin{lemma}{section}
\crefname{factAuxCoq}{fact}{facts}
\crefname{factAux}{fact}{facts}
\crefname{theoremAuxCoq}{theorem}{theorems}
\crefname{lemmaAuxCoq}{lemma}{lemmas}
\crefname{corollaryAux}{corollary}{corollaries}
\renewcommand\N[1]{{#1}}
\newcommand{\coqemph}[2][]{\coqlink[#1]{\emph{#2}}}
\renewcommand{\M}[1]{\ensuremath{\mathsf{#1}}}
\renewcommand\phi\varphi
\newcommand\Ter{\M{Ter}}
\newcommand\Pro{\M{Pro}}
\newcommand\Clo{\M{Clo}}
\newcommand\Code{\M{Code}}
\newcommand\PA{\M{PA}}
\newcommand\Com{\M{Com}}
\newcommand\Heap{\M{Heap}}
\newcommand\HA{\M{HA}}
\newcommand\HC{\M{HC}}
\newcommand\HE{\M{HE}}
\newcommand\eval{\rhd}
\newcommand\red{\succ}
\newcommand\tred{\succ_\tau}
\newcommand\bred{\succ_\beta}
\newcommand\ret{\M{ret}}
\newcommand\var{\M{var}}
\newcommand\app{\M{app}}
\newcommand\lamb{\M{lam}}
\newcommand\bnd{\mathrel{<}}
\begin{document}
\title{Formal Small-step Verification of a\\
  Call-by-value Lambda Calculus Machine}
\author{Fabian Kunze, Gert Smolka, Yannick Forster}
\institute{Saarland University, Saarbrücken, Germany
\email{\{kunze,smolka,forster\}@ps.uni-saarland.de}}

\maketitle

\begin{abstract}
  \noindent
  We formally verify an abstract machine
  for a call-by-value $\lambda$-calculus
  with de Bruijn terms, simple substitution, 
  and small-step semantics.
  We follow a stepwise refinement approach
  starting with a naive stack machine with substitution. 
  We then refine to a machine with closures,
  and finally to a machine with a heap
  providing structure sharing for closures.
  We prove the correctness of
  the three refinement steps
  with compositional small-step bottom-up simulations.
  There is an accompanying Coq development
  verifying all results.
\end{abstract}

\enlargethispage{2cm}
\blfootnote{\hspace{-5.5mm}\copyright\ Springer Nature Switzerland AG 2018\\
S. Ryu.\ (Ed): APLAS 2018, LNCS 11275, pp.\ 264-283, 2018.\\
The final authenticated publication is available online at: \url{https://doi.org/10.1007/978-3-030-02768-1_15}}\vspace{-5mm}

\section{Introduction}

The call-by-value $\lambda$-calculus
is a minimal functional programming language
that can express recursive functions
and inductive data types.
Forster and Smolka~\cite{Forster17} employ
the call-by-value $\lambda$-calculus as
the basis for a 
constructive theory of computation
and formally verify elaborate programs
such as step-indexed self-interpreters.
Dal Lago and Martini~\cite{DalLagoMartini08} show that
Turing machines and the 
call-by-value \text{$\lambda$-calculus}
can simulate each other 
within a polynomial time overhead
(under a certain cost model).
Landin's SECD machine implements
the call-by-value \text{$\lambda$-calculus}
with closures eliminating the need for substitution~\cite{Landin64,Plotkin75}.

In this paper we consider the 
call-by-value $\lambda$-calculus L
from~\cite{Forster17}.
L~comes with de Bruijn terms
and simple substitution, and restricts
$\beta$-reduction to terms of 
the form $(\lambda s)(\lambda t)$
that do not appear within abstractions.
This is in contrast to
Plotkin's call-by-value $\lambda$-calculus~\cite{Plotkin75},
which employs terms with named argument variables 
and substitution with renaming,
and $\beta$-reduces terms of the forms 
$(\lam xs)(\lam yt)$ and $(\lam xs)y$.
L and Plotkin's calculus agree for closed terms,
which suffice for functional computation.

The subject of this paper is the formal verification
of an abstract machine for~L with
closures and structure sharing.
Our machine differs from the SECD machine
in that it operates on programs rather than terms,
has two flat stacks rather than one stack of frames,
and provides structure sharing through a heap.
Our goal was to come up with 
a transparent machine design providing for 
an elegant formal verification.
We reach this goal with 
a stepwise refinement approach
starting with a naive stack machine
with programs and substitution. 
We then refine to a machine with closures,
and finally to a machine with a heap.
As it comes to difficulty of verification,
the refinement to the naive stack machine
is by far the most substantial.


We prove the correctness of
the three refinement steps
with compositional small-step bottom-up simulations
(i.e., L is above the machines and simulates machine transitions)\todo{Mention high/low-level?}.
While L has only $\beta$-steps,
our machines have $\beta$- and \text{$\tau$-steps}.
$L$~simulates a machine by following $\beta$-steps
and ignoring $\tau$-steps,
and a machine simulates a lower-level machine 
by following $\beta$-steps with $\beta$-steps
and \text{$\tau$-steps} with $\tau$-steps.
To obtain bisimulations, 
we require progress conditions:
Reducibility must propagate downwards and
machines must stop after finitely many $\tau$-steps.

The first verification step establishes
the naive stack machine as a correct implementation of~L,
the second verification step establishes 
the closure machine as a correct implementation of 
the naive stack machine,
and the third verification step establishes
the heap machine as a correct implementation of
the closure machine.
The second and third verification step
are relatively straightforward since
they establish strict simulations (no silent steps).
Strict simulations suffice since 
the programs of the naive stack machine 
already provide the right granularity
for the structure sharing heap machine.

The entire development is formalised with 
the Coq proof assistant~\cite{Coq}.
Coq's type theory provides an ideal foundation
for the various inductive constructions needed for 
the specification and verification of the machines.  
All reasoning is naturally constructive.
In the paper we don't show Coq code but use
mathematical notation and language throughout.
While familiarity with constructive type theory
is helpful for reading the paper,
technical knowledge of Coq is not required.
For the expert and the curious reader,
the definitions and theorems in the paper 
are hyperlinked with their formalisations 
in an HTML rendering of the Coq  development.
The Coq formalisation is available at 
\url{https://www.ps.uni-saarland.de/extras/cbvlcm2/}.
\todo{coq-links}
\todo{check coq}
\subsection*{Related Work}

We review work concerning the verification
of abstract machines for call-by-value 
$\lambda$-calculus.

Plotkin~\cite{Plotkin75} presents
the first formalisation and verification 
of Landin's SECD machine~\cite{Landin64}.
He considers terms and closures with named
variables and proves that his machine
computes normal forms of closed terms using 
a step-indexed evaluation semantics for terms
and top-down arguments (from $\lambda$-calculus to machine).
He shows that failure of term evaluation 
for a given bound
entails failure of machine execution for this bound.
Plotkin does not prove his substitution lemmas.
Ramsdell~\cite{Ramsdell99} reports on a formalisation of
a Plotkin-style verification of
an SECD machine optimising tail calls
using the Boyer-Moore theorem prover.
Ramsdell employs de Bruijn terms 
and de Bruijn substitution.

Felleisen and Friedman~\cite{Felleisen86} study 
 Plotkin's call-by-value $\lambda$-calculus 
extended with control operators like J and call/cc.
They prove correctness properties relating
abstract machines, small-step reduction systems,
and algebraic theories.
Like Plotkin, they use terms and closures with named variables.

Rittri~\cite{rittri1988} seems to be the first 
who verifies an abstract machine for 
a call-by-value $\lambda$-calculus
using a small-step bottom up simulation.
Rittri's work is also similar to ours 
in that he starts from a $\lambda$-calculus 
with simple substitution reducing closed terms,
and in that his machine uses a control and an argument stack.
Rittri gives detailed informal proofs
using terms with named variables.
He does not consider a naive intermediate machine 
nor a heap realisation. 

Hardin et al.~\cite{Hardin98} 
verify several abstract machines
with respect to a fine-grained $\lambda$-calculus 
with de Bruijn terms and explicit substitution primitives.  
Like us, they simulate
machine steps with reduction steps of the calculus
and disallow infinitely many consecutive silent steps.
They consider 
the Krivine machine~\cite{Cregut1990} (call-by-name),
the SECD machine~\cite{Landin64,Plotkin75} (call-by-value),
Cardelli's FAM~\cite{Cardelli84} (call-by-value), and
the categorical abstract machine~\cite{Cousineau87}
(call-by-value) .

Accattoli et al.~\cite{Accattoli14} 
verify several abstract machines for
the linear substitution calculus with
explicit substitution primitives.
They simulate machine steps with reduction steps
of the calculus and model internal steps of the calculus
with a structural congruence.
They employ a global environment acting as heap.
Among other machines, 
they  verify a simplified variant of 
the ZINC machine \cite{leroy90}.

Leroy~\cite{Leroy2009,Leroy16} verifies 
the Modern SECD machine for
call-by-value $\lambda$-calculus
specified with de Bruijn terms and
an environment-based evaluation semantics in Coq.
The modern SECD machine has programs and a single stack.
Leroy's semantic setup is such that 
neither substitution nor 
small-step reduction of terms
have to be considered.
He uses top-down arguments and compiles
terms into machine states.
Using coinductive divergence predicates,
Leroy shows that the machine diverges
on states obtained from diverging terms.
Leroy's proofs are pleasantly straightforward.

Danvy and Nielsen~\cite{Danvy04refocusing} 
introduce the refocusing technique, 
a general procedure transforming
small-step reduction systems defined with evaluation contexts 
into abstract machines operating on the same syntax.  
Biernacka and Danvy~\cite{Biernacka07} 
extend refocusing and obtain
environment-based abstract machines. 
This yields a framework where the
derived machines are provably correct 
with respect to small-step bisimulation. 
Biernacka et al.~\cite{Biernacka17} formalise 
a generalisation of the framework in Coq.

Swierstra~\cite{Swierstra12} formally verifies the correctness
of a Krivine machine for simply typed $\lambda$-calculus
in the dependently typed programming language Agda.
Also following Biernacka and Danvy~\cite{Biernacka07},
Swierstra does this
by showing the correctness of a
Krivine-style evaluator for
an iterative and environment-based head reduction evaluator.
This way substitution does not appear.
Swierstra's dependently typed constructions
also provide normalisation proofs for
simply typed $\lambda$-calculus.
Swierstra's approach will not work 
for untyped $\lambda$-calculus.



\subsection*{Contribution of the Paper}

We see the main contribution of the paper
in the principled formal verification 
of a heap machine for a call-by-value $\lambda$-calculus
using a small-step bottom-up simulation.
A small-step bottom-up verification
is semantically more \todo{clearify} informative than the usual
evaluation-based top-down verification 
in that it maps every reachable  machine state to a term of L.
The entire Coq development consists of
500 lines of proof plus 750 lines of specification.
The decomposition of the verification 
in three refinement steps provides for 
transparency and reusability.
The use of the naive stack machine 
as an intermediate machine appears to be new.
We also think that our simple formalisation
of structure sharing with code and heap
is of interest.

We envision a formal proof showing that
Turing machines can simulate L with
polynomial overhead in time and 
constant overhead in space
(under a suitable cost model)~\cite{LOLA}.
The verifications in this paper
are one step into this direction.

\subsection*{Plan of the Paper}

After some preliminaries 
fixing basic notions in Coq's type theory,
we specify the call-by-value $\lambda$-calculus L
and present our abstract framework 
for machines and refinements.
We then introduce programs and program substitution
and prove a substitution lemma.
Next we specify and verify the naive stack machine for L.
This is the most complex refinement step as it comes to proofs.
Next we specify the closure machine
and verify that it is an implementation 
of the naive stack machine and hence of L (by compositionality).
Finally, we define abstractions for codes and heaps
and verify that the heap machine is an implementation
of the closure machine and hence of~L.
\todo{This was out for the initial submisission,
  but we have space to take it back in.}

\section{Preliminaries}
\setCoqFilename{Prelims}

Everything in this paper is carried out
in Coq's type theory and all reasoning is constructive. \todo{explain $\Prop$ and $\bot$}
We use the following inductive types:
$\nat$ providing the \emph{numbers}
  $n::=0\mid\natS n$, and
 $\opt(X)$ providing the \emph{options} 
  $\none$ and $\some x$\todo{mention that missing cases default to \none}, and
 $\List(X)$ providing the \emph{lists}
 $A::=\nil\mid x::A$.
 
For lists $A,B:\List(X)$ we use the functions 
\emph{length} $|A|:\nat$,
\emph{concatenation} $A\con B:\List(X)$,
\emph{map} $f@A:\List(Y)$ where $f:X\to Y$, and
\emph{lookup} $A[n]:\opt(X)$ where 
${(x::A)[0]=\some x}$,
and ${(x::A)[\natS n]=A[n]}$,
and $\nil[n]=\none$.
When we define functions that yield an option,
we will omit equations that yield $\none$
(e.g., the third equation $\nil[n]=\none$ 
defining lookup $A[n]:\opt(X)$ will be omitted).

We write $\Prop$ for the universe of propositions
and $\bot$ for the proposition falsity.
A \emph{relation on $X$ and $Y$} is 
a predicate $X\to Y\to\Prop$,
and a \emph{relation on $X$} is
a predicate $X\to X\to\Prop$.
A relation $R$ is \coqemph[functional]{functional}
if $y=y'$ whenever $Rxy$ and $Rxy'$.
A relation $R$ on $X$ and~$Y$ is \coqemph[computable]{computable}
if there is a function $f:X\to\opt(Y)$ such that 
$  \forall x.~ 
  (\exists y.~fx=\some y\land Rxy)\lor
  (fx=\none\land\neg\exists y.~Rxy)
$.

We use a recursive \emph{membership} predicate $x\in A$
such that $(x\in\nil)=\bot$ and 
$(x\in y::A)=(x{=}y\lor x\in A)$.


We define an inductive predicate \emph{$\M{ter}_R~x$}
identifying the \coqemph[terminatesOn]{terminating points}
of a relation $R$ on~$X$:
\begin{mathpar}
  \inferrule*
  {\forall x'.~Rxx'\to\M{ter}_R~x'}
  {\M{ter}_R~x}
\end{mathpar}
If $x$ is a terminating point of $R$,
we say that \coqemph[terminatesOn]{$R$ terminates on $x$}
or that \coqemph[terminatesOn]{$x$ terminates for $R$}.
We call a relation \emph{terminating}
if it terminates on every point.

Let $R$ be a relation on $X$.
The \coqemph[evaluates]{span of $R$} is 
the inductive relation \coqemph[evaluates]{$\eval_R$} on $X$
defined as follows:
\begin{mathpar}
 \inferrule*{\neg\exists y.~Rxy}{x\eval_R x}
 \and
 \inferrule*{Rxx'\\x'\eval_R y}{x\eval_R y}
\end{mathpar}
If $x\eval_R y$,
we say that \coqemph[evaluates]{$y$ is a normal form of $x$ for $R$}.

\begin{fact}[][evaluates_fun]\label{fact-eval}
  \begin{enumerate}
  \coqitem[evaluates_fun] If $R$ is functional, then $\eval_R$ is functional.
  \coqitem[normalizes_terminates] 
  If $R$ is functional and $x$ has a normal form for $R$,
    then $R$ terminates on $x$.
  \coqitem[terminates_normalizes] If $R$ is computable, 
    then every terminating point of $R$
    has a normal form for~$R$.
  \end{enumerate}
\end{fact}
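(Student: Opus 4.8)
The plan is to prove all three parts by a single induction each, letting the characteristic hypothesis ($R$ functional or $R$ computable) control the branching structure. For part~1 I would establish the strengthened statement $\forall y'.~x\eval_R y'\to y=y'$ by induction on the derivation of $x\eval_R y$, inverting the second derivation $x\eval_R y'$ in each case. In the base case $x\eval_R x$, the side condition $\neg\exists z.~Rxz$ rules out the step rule for $x\eval_R y'$, forcing $y'=x$. In the step case, with $Rxx'$ and $x'\eval_R y$, inversion on $x\eval_R y'$ excludes the base rule (since $Rxx'$ exhibits a successor) and produces some $x''$ with $Rxx''$ and $x''\eval_R y'$; functionality then gives $x'=x''$, so $x'\eval_R y'$, and the induction hypothesis applied to $x'\eval_R y'$ yields $y=y'$.

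For part~2 I would derive $\M{ter}_R~x$ from a witness $x\eval_R y$ by induction on that evaluation derivation. The base case $\neg\exists z.~Rxz$ makes the premise of the termination rule vacuously true, so $\M{ter}_R~x$ holds immediately. In the step case $Rxx'$, $x'\eval_R y$, the induction hypothesis delivers $\M{ter}_R~x'$; to conclude $\M{ter}_R~x$ I must show that every successor of $x$ terminates, and functionality collapses every such successor to $x'$, for which termination is already known.

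For part~3 I would fix a computability witness $f$ and build a normal form for $x$ by induction on the derivation of $\M{ter}_R~x$, whose induction hypothesis states that every $R$-successor of $x$ has a normal form. I then case on $fx$: if $fx=\none$, computability gives $\neg\exists y.~Rxy$, so the base rule yields $x\eval_R x$; if $fx=\some{x'}$, computability gives $Rxx'$, the induction hypothesis provides some $y$ with $x'\eval_R y$, and the step rule yields $x\eval_R y$. The main obstacle is precisely this constructive case distinction: deciding whether $x$ has a successor is not available in general, and computability is exactly the hypothesis that makes the dichotomy --- and hence the choice between the two rules of $\eval_R$ --- effective. In parts~1 and~2 the only care needed is the repeated appeal to functionality, used to identify an arbitrary successor of $x$ with the specific $x'$ supplied by the derivation so that inversion and the induction hypotheses apply.
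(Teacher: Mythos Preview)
Your proposal is correct and matches the paper's own proof, which simply records that (1) is by induction on $\eval_R$, (2) by induction on $x\eval_R y$, and (3) by induction on $\M{ter}_R\,x$. The details you supply (inversion on the second derivation in~(1), use of functionality to collapse successors in~(2), and the case split on the computability witness in~(3)) are exactly the natural unfolding of those inductions.
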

\makeproof{fact-eval}{
  (1) follows by induction on $\eval_R$.
  (2) follows by induction on $x\eval_Ry$.
  (3) follows by induction on $\M{ter}_R\,x$.
\qed}

A \coqemph[ARS]{reduction system} is a structure
consisting of a type $X$ and a relation $R$ on~$X$.
Given a reduction system $A=(X,R)$,
we shall write $A$ for the type $X$ and
$\red_A$ for the relation of $A$.
We say that \emph{$a$ reduces to $b$ in $A$}
if $a \red_Ab$.

\section{Call-by-value Lambda Calculus L}
\setCoqFilename{L}

The call-by-value $\lambda$-calculus
we consider in this paper
employs de Bruijn terms with
simple substitution and
admits only abstractions as values.

We provide \coqemph[term]{terms} with an inductive type
\begin{align*}
  s,t,u,v
  &~:~\Ter~::=~
    n\mid st\mid\lambda s
  \qquad(n:\nat)
\end{align*}
and define a recursive function \coqemph[subst]{$\subst sku$}
providing \emph{simple substitution}:
\begin{align*}
  \subst kku
  &~:=~u
  &&&&&\subst{(st)}ku
  &~:=~(\subst sku)(\subst tku)&&
  \\
  \subst nku
  &~:=~n
  &&\text{if}~n\neq k&&
  &
  \subst{(\lambda s)}ku
  &~:=~\lambda(\subst s{\natS k}u)&&
\end{align*}
We define an inductive 
\coqemph[stepL]{reduction relation $s\red t$}
on terms:
\begin{mathpar}
  \inferrule*{~}{(\lambda s) (\lambda t)\red\subst s0{\lambda t}}
  \and
  \inferrule*{s\red s'} {st\red s't}
  \and
  \inferrule* {t\red t'} {(\lambda s)t\red(\lambda s)t'}
\end{mathpar}

\begin{fact}[][stepL_funct]
  $s\red t$ is functional and computable.
\end{fact}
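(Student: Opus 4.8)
The plan is to treat the two claims—functionality and computability—separately, with a single auxiliary observation carrying most of the weight in both. That observation is that \emph{abstractions are irreducible}: there is no $t$ with $\lambda s\red t$. This is immediate by inversion, since none of the three rules has a conclusion of the form $\lambda s\red t$ (the $\beta$-rule produces an application on its left-hand side, and the two congruence rules also conclude with an application). I would record this as a short lemma first, because it is exactly what rules out the apparent overlap between the $\beta$-rule and the argument-congruence rule on a term like $(\lambda a)(\lambda b)$.

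For functionality, suppose $s\red t$ and $s\red t'$; I would argue by induction on the first derivation and inversion on the second. In the $\beta$-case $s=(\lambda a)(\lambda b)$, the only rule that can re-derive $s\red t'$ is the $\beta$-rule itself: function-congruence would require $\lambda a\red a'$ and argument-congruence would require $\lambda b\red b'$, both excluded by irreducibility; hence $t'=\subst a0{\lambda b}=t$. In the function-congruence case $s=uv$ with $u\red u'$, inversion on $s\red t'$ discards the $\beta$-rule and argument-congruence (each would force $u$ to be an abstraction, contradicting $u\red u'$), leaving function-congruence, where the induction hypothesis on $u\red u'$ gives uniqueness of the reduced head. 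The argument-congruence case $s=(\lambda a)v$ with $v\red v'$ is symmetric: the $\beta$-rule is excluded because it would need $v=\lambda b$ while $v$ reduces, and function-congruence is excluded because $\lambda a$ is irreducible, so the induction hypothesis finishes the case.

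For computability I would exhibit the decision function $f:\Ter\to\opt(\Ter)$ directly, defined by structural recursion to mirror the deterministic strategy: set $f(n)=\none$ and $f(\lambda s)=\none$; on an application $st$, if $s=\lambda a$ then return $\some(\subst a0{\lambda b})$ when $t=\lambda b$ and otherwise return $f(t)$ with a successful result $t'$ rewrapped as $(\lambda a)t'$, while if $s$ is not an abstraction, return $f(s)$ with a successful result $s'$ rewrapped as $s't$. The required specification then splits into soundness, $fs=\some t\Rightarrow s\red t$, and completeness, $fs=\none\Rightarrow\neg\exists t.\,s\red t$; both follow by structural induction on $s$, reading off in each clause which rule applies and appealing once more to irreducibility of abstractions to confirm that the $\none$-clauses are genuinely stuck (the variable case, and the variable/stuck-argument subcases of an application).

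The main obstacle—such as it is—is not any calculation but keeping the case analysis exhaustive and non-overlapping, and this is precisely where irreducibility of abstractions is indispensable: without it, the $\beta$-rule and the argument-congruence rule would both seem to apply to $(\lambda a)(\lambda b)$, breaking determinism and making the decision function ill-specified. Everything else is routine inversion and structural recursion.
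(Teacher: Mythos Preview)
Your argument is correct and is the standard one. The paper itself does not spell out a proof for this fact at all---it is stated without justification (the verification is relegated to the accompanying Coq development)---so there is nothing to compare your approach against beyond noting that it is exactly the routine argument one would expect: irreducibility of abstractions kills the apparent overlap between the $\beta$-rule and the argument-congruence rule, after which functionality is a clean induction/inversion and computability follows from the obvious structural evaluator.
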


We define an 
\coqemph[boundL]{inductive bound predicate $s\bnd k$}
for terms:
\begin{mathpar}
  \inferrule*{n<k}{n\bnd k}
  \and
  \inferrule*{s\bnd k\\t\bnd k}{st\bnd k}
  \and
  \inferrule*{s\bnd\natS k}{\lambda s\bnd k}
\end{mathpar}
Informally, $s\bnd k$ holds if
every free variable of $s$ is smaller than $k$.
A term is \coqemph[closedL]{closed} if $s\bnd 0$.
A term is \emph{open} if it is not closed.

For closed terms, 
reduction in L agrees with
reduction in the $\lambda$-calculus.
For open terms,
reduction in L is ill-behaved
since L is defined with simple substitution. 
For instance, we have
$(\lambda\lambda 1)(\lambda 1)(\lambda 0)
\red(\lambda\lambda 1)(\lambda 0)
\red\lambda\lambda 0$.
Note that the second $1$ in the initial term is not bound and refers to the De Bruijn index $0$. Thus the first reduction step is capturing.

We define \coqemph[stuck]{stuck terms} inductively:
\begin{mathpar}
  \inferrule*~{\M{stuck}~n}
  \and
  \inferrule*{\M{stuck}~s}{\M{stuck}\,(st)}
  \and
  \inferrule*{\M{stuck}~t}{\M{stuck}\,((\lambda s)t)}
\end{mathpar}

\begin{fact}[Trichotomy][L_trichotomy]\label{fact-term-tricho}
  For every term $s$,
  exactly one of the following holds:\\
  (1)~$s$~is reducible. 
  (2)~$s$~is an abstraction.
  (3)~$s$~is stuck.
\end{fact}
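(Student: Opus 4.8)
The plan is to prove the two halves of ``exactly one'' separately: the three alternatives are \emph{exhaustive} (at least one holds) and \emph{mutually exclusive} (at most one holds).

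For exhaustiveness I would induct on the structure of $s$. A variable $n$ is stuck by the first stuck rule, and an abstraction $\lambda t$ is an abstraction. For an application $s_1s_2$ I apply the induction hypothesis to $s_1$: if $s_1$ is reducible then $s_1s_2\red s_1's_2$ by the left congruence rule, and if $s_1$ is stuck then $s_1s_2$ is stuck by the second stuck rule. If $s_1$ is an abstraction $\lambda u$, I then apply the induction hypothesis to $s_2$: if $s_2$ reduces then $(\lambda u)s_2$ reduces by the right congruence rule, if $s_2$ is an abstraction then $(\lambda u)s_2$ is a $\beta$-redex and reduces, and if $s_2$ is stuck then $(\lambda u)s_2$ is stuck by the third stuck rule. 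This covers every case, so only the ``at least one'' form of the hypothesis is needed.

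For mutual exclusivity I would establish three disjointness statements. (A)~No abstraction is reducible: inversion on the reduction rules shows that none has a conclusion of the form $\lambda s\red t$, since there is no congruence rule reducing under $\lambda$. (C)~No abstraction is stuck: inversion on the stuck rules shows that every stuck term is a variable or an application. Given (A) and (C), the central claim is (B): no stuck term is reducible, which I prove by induction on the derivation of $\M{stuck}\,s$. The variable case is immediate. If $s_1s_2$ is stuck because $s_1$ is stuck, any reduction of $s_1s_2$ must come either from the left congruence rule, which forces $s_1$ reducible (excluded by the induction hypothesis), or from $\beta$-reduction or the right congruence rule, both of which force $s_1$ to be an abstraction (excluded by (C)); hence $s_1s_2$ is irreducible. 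If $(\lambda s)s_2$ is stuck because $s_2$ is stuck, a reduction must be a $\beta$-step (forcing $s_2$ an abstraction, excluded by (C)), a right congruence step (forcing $s_2$ reducible, excluded by the induction hypothesis), or a left congruence step (forcing the abstraction $\lambda s$ reducible, excluded by (A)); so again $(\lambda s)s_2$ is irreducible.

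The main obstacle is (B): it is the only place where the inversion of all three reduction rules must be combined with the inductive hypothesis, and it crucially relies on (A) and (C) to discard exactly those cases in which a stuck application would have to expose a $\beta$-redex at its head.
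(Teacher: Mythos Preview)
Your proof is correct and matches the paper's approach, which is simply ``by induction on $s$''; you have spelled out the case analysis that this induction entails for exhaustiveness, and added an explicit treatment of mutual exclusivity that the paper leaves implicit. The only organisational difference is that you prove irreducibility of stuck terms by a separate induction on the stuck derivation rather than folding the ``exactly one'' claim into a single structural induction on~$s$, but both routes are straightforward.
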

\makeproof{fact-term-tricho}{
  By induction on $s$.
\qed}




\section{Machines and Refinements}
\setCoqFilename{Refinements}

We model machines as reduction systems.
Recall that L is also a reduction system.
We relate a machine M with L with a relation 
$a\gg s$ we call \textit{refinement}.
If $a\gg s$ holds, 
we say that $a$ (a state of $M$) 
refines $s$ (a term of L).
Correctness means that L can simulate steps of M 
such that refinement between states and terms is preserved.
Concretely, 
if $a$ refines $s$ and $a$ reduces to $a'$ in M,
then either $a'$ still \textit{refines} $s$
or~$s$ reduces to some $s'$ in L
such that $a'$ refines $s'$.
Steps where the refined term stays unchanged
are called \textit{silent}.

The general idea is now as follows.
Given a term $s$, 
we compile $s$ into a refining state~$a$.
We then run the machine on $a$.
If the machine terminates 
with a normal form~$b$ of~$a$,
we decompile $b$ into a term $t$
such that $b$ refines $t$ and
conclude that~$t$ is a normal form of $s$.
We require that the machine terminates
for every state refining a term 
that has a normal form.

\begin{definition}[][Machine]
  A \coqemph[Machine]{machine} is a structure 
  consisting of a type $A$ of \coqemph[M_A]{states} 
  and two relations \coqemph[M_rel]{$\tred$} and \coqemph[M_rel]{$\bred$}
  on~$A$. When convenient, we consider a machine $A$
  as a reduction system with the relation
  $\coqlink[M_rel]{\N{\red_A}}:={\tred}\cup{\bred}$.
\end{definition}

The letter $X$ ranges over reduction systems
and $A$ and $B$ range over machines.

\begin{definition}[][refinement_ARS]\label{definition-ref}
  A \coqemph[refinement_ARS]{refinement $A$~to~$X$} 
  is a relation $\gg$ on~$A$ and $X$ 
  such   that:
  \begin{enumerate}
  \item If $a\gg x$ and $x$ is reducible with $\red_X$, 
    then $a$ is reducible with $\red_A$.
  \item If $a\gg x$ and $a\tred a'$, then $a'\gg x$.
  \item If $a\gg x$ and $a\bred a'$, then there exists $x'$ such that
    $a'\gg x'$ and $x\red x'$.
  \item If $a\gg x$, then $a$ terminates for $\tred$.
  \end{enumerate}
  We say that \emph{$a$ refines $x$} if $a\gg x$.
\end{definition}

Figure~\ref{fig:simulation} 
illustrates refinements with a diagram.  
Transitions in $X$ appear in the upper line
and transitions in $A$ appear in the lower line.
The dotted lines represent the refinement relation.
Note that conditions~(2) and~(3) of
Definition~\ref{definition-ref} ensure
that refinements are bottom~up simulations
(i.e., $X$ can simulate~$A$).
Conditions~(1) and~(4) are progress conditions.
They suffice to ensure that refinements also
act as top-down simulations (i.e. $A$ can simulate $X$),
given mild assumptions 
that are fulfilled by L and all our machines.

\begin{figure}[t]
  \[
    \begin{tikzcd}[column sep=tiny]
      &&x \ar[d, dash, dotted] 
      \ar[dll, dash, dotted] &\red &x' \ar[d, dash, dotted] \\[3mm]
      a&\tred~\cdots~\tred &a'' &\bred &a'
    \end{tikzcd}
  \]
  \vspace{-5mm}
  \caption{Refinement diagram}
  \label{fig:simulation}
\end{figure}

\begin{fact}[Correctness][upSim]
  Let $\gg$ be a refinement $A$ to $X$ and $a\gg x$.
  Then:
  \begin{enumerate}
  \coqitem[upSim] If $a\eval_A a'$, 
    there exists $x'$ such that
    $a'\gg x'$ and $x\eval_X x'$.
  \coqitem[rightValue] If $a\eval_A a'$,  $a'\gg x'$, 
    and $\gg$ is functional,
    then $x\eval_X x'$.
  \coqitem[termination_propagates] If $x$ terminates for $\red_X$,
    then $a$ terminates for $\red_A$.
  \coqitem[evaluation_propagates] If $x$ terminates for $\red_X$
    and $\red_A$ is computable, then
    there exists $a'$ such that
    $a\eval_A a'$.
  \end{enumerate}
\end{fact}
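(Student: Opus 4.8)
The four claims separate cleanly into a simulation part (1)--(2) and a termination-transfer part (3)--(4), and I would prove them in that order, since (2) reduces to (1) and (4) reduces to (3).

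For (1), the plan is induction on the derivation of $a\eval_A a'$. In the base case $a\eval_A a$ arises because $a$ is a normal form for $\red_A$; if $x$ were reducible with $\red_X$, then condition~(1) of Definition~\ref{definition-ref} would make $a$ reducible, a contradiction, so $x$ is a normal form, $x\eval_X x$, and I take $x'=x$. In the step case $a\red_A a''\eval_A a'$ I split on whether the first step is silent or a $\beta$-step: a $\tred$-step gives $a''\gg x$ by condition~(2), so the induction hypothesis (with the same $x$) yields the required $x'$; a $\bred$-step gives, by condition~(3), some $y$ with $a''\gg y$ and $x\red_X y$, the induction hypothesis then produces $x'$ with $a'\gg x'$ and $y\eval_X x'$, and prefixing the step $x\red_X y$ via the second span rule gives $x\eval_X x'$. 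Claim~(2) is then immediate: (1) supplies some $x''$ with $a'\gg x''$ and $x\eval_X x''$, and functionality of $\gg$ forces $x''=x'$.

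The heart of the proof is (3), and the main obstacle is that $a$ may perform silent $\tred$-steps with no counterpart in $X$, so termination of $x$ alone does not bound the $A$-computation; the progress condition~(4) must also be used. The plan is a nested induction. The outer induction is on $\M{ter}_{\red_X}x$, proving $\forall a.\,a\gg x\to\M{ter}_{\red_A}a$; its hypothesis gives, for every $y$ with $x\red_X y$, that every state refining $y$ already terminates for $\red_A$. Fixing $a\gg x$, condition~(4) hands me $\M{ter}_{\tred}a$, on which I run the inner induction, proving $a\gg x\to\M{ter}_{\red_A}a$. To establish $\M{ter}_{\red_A}a$ I must show $\M{ter}_{\red_A}a'$ for every $a'$ with $a\red_A a'$, casing on the kind of step: a $\tred$-step preserves refinement to the same $x$ by condition~(2), so the inner hypothesis applies directly; a $\bred$-step yields by condition~(3) some $y$ with $a'\gg y$ and $x\red_X y$, whereupon the outer hypothesis for $y$ gives $\M{ter}_{\red_A}a'$. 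This closes both inductions. The delicate point is making the two inductions interlock: silent steps are absorbed by the inner induction (keeping $x$ fixed), while $\beta$-steps strictly decrease the outer measure, so that $X$-termination controls exactly the $\beta$-steps and condition~(4) controls the silent bursts between them.

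Finally, (4) is a corollary: (3) gives $\M{ter}_{\red_A}a$, and since $\red_A$ is computable, Fact~\ref{fact-eval}(3) turns this terminating point into a normal form, i.e.\ some $a'$ with $a\eval_A a'$.
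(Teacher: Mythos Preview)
Your proposal is correct and follows essentially the same line as the paper: (1) by induction on the span derivation $a\eval_A a'$, (3) by the nested induction on $\M{ter}_{\red_X}x$ (outer) and $\M{ter}_{\tred}a$ (inner) that you spell out in detail, and (4) as a corollary of (3) together with Fact~\ref{fact-eval}(3). For (2) you invoke functionality of $\gg$ directly after applying (1), which is exactly what is needed; the paper's additional pointer to Fact~\ref{fact-eval} is not essential at this point.
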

\begin{proof}
  (1) follows by induction on $a\eval_Ax$.
  (2) follows with (1) and Fact~\ref{fact-eval}.
  (3) follows by induction 
  on the termination of $x$ for $\red_X$
  and the termination of $a$ for $\tred$.
  (4) follows by induction 
  on the termination of $x$.
\qed\end{proof}

We remark that the concrete reduction systems
we will consider in this paper
are all functional and computable.
Moreover, all concrete refinements will be functional
and, except for the heap machine,
also be computable.

A refinement may be seen as the combination of
an invariant and a decompilation function.
We speak of an invariant since the fact 
that a state is a refinement of a term
is preserved by the reduction steps of the machine.

Under mild assumptions fulfilled in our setting,
the inverse of a refinement
is a stuttering bisimulation~\cite{Baier2008}.
The following fact asserts 
the necessary top-down simulation.

\begin{fact}[][one_downSim]
  Let $\gg$ be a refinement $A$ to $X$ where 
  $\red_X$ is functional and $\tred$ is computable.
  \begin{enumerate}
  \coqitem[one_downSim] If $a\gg x\red_Xx'$, 
    then there exist $a'$ and $a''$
    such that $a\eval_\tau a''\bred a'\gg x'$.
  \coqitem[downSim] If $a\gg x\eval_Xx'$, 
    then there exists $a'$ 
    such that $a\eval_A a'\gg x'$.
  \end{enumerate}
\end{fact}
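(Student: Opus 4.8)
The plan is to prove (1) first and then derive (2) by induction on $\eval_X$, reusing (1) in the step case. Throughout I will need two small auxiliary observations: that silent steps preserve the refined term (an easy induction from condition~(2) of Definition~\ref{definition-ref}, giving that $a\eval_\tau a''$ and $a\gg x$ imply $a''\gg x$), and that prepending $\red_A$-steps to a completed evaluation preserves it (which is just iterated use of the step rule of $\eval_A$).

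For (1), I start from $a\gg x$. Condition~(4) gives that $a$ terminates for $\tred$, and since $\tred$ is assumed computable, Fact~\ref{fact-eval}(3) yields a $\tau$-normal form $a''$ with $a\eval_\tau a''$. The first auxiliary observation gives $a''\gg x$. Because $x\red_X x'$, the term $x$ is $\red_X$-reducible, so condition~(1) forces $a''$ to be $\red_A$-reducible; as $a''$ is $\tau$-normal, the only available reduction is a $\beta$-step $a''\bred a'$. Condition~(3) then supplies some $x''$ with $a'\gg x''$ and $x\red_X x''$, and functionality of $\red_X$ identifies $x''=x'$. Hence $a\eval_\tau a''\bred a'\gg x'$, as required.

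For (2), I proceed by induction on the derivation of $x\eval_X x'$. In the base case $x$ is $\red_X$-normal and $x'=x$; as in (1) I take a $\tau$-normal form $a''$ with $a\eval_\tau a''$ and $a''\gg x$. Here $a''$ must in fact be $\red_A$-normal: otherwise, being $\tau$-normal, it would admit a $\beta$-step, which by condition~(3) would produce a $\red_X$-successor of $x$, contradicting normality of $x$. Converting the silent path into an $\red_A$-path (every $\tred$-step is an $\red_A$-step) and using $\red_A$-normality of $a''$ then yields $a\eval_A a''\gg x$. In the step case $x\red_X x''\eval_X x'$, I apply (1) to $a\gg x\red_X x''$ to obtain $a\eval_\tau a_1\bred a_2\gg x''$, and the induction hypothesis to $a_2\gg x''\eval_X x'$ to obtain $a_2\eval_A a'\gg x'$. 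Prepending the finitely many $\red_A$-steps from $a$ to $a_2$ (the silent prefix followed by one $\beta$-step) onto $a_2\eval_A a'$ gives $a\eval_A a'\gg x'$.

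The main obstacle is not any single deduction but the bookkeeping around the two span relations $\eval_\tau$ and $\eval_A$: one must repeatedly pass between ``$\tau$-normal form reached by silent steps'' and ``normal form for the full machine relation $\red_A$'', and correctly glue $\red_A$-prefixes onto already-completed evaluations. Once the two auxiliary lemmas above are in place, the remainder is the diagram chase licensed by conditions~(1), (3) and~(4) together with functionality of $\red_X$ and computability of $\tred$.
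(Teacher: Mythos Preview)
Your proof is correct and follows exactly the route the paper takes: part~(1) via termination of $\tred$ plus Fact~\ref{fact-eval}(3), then the diagram chase using conditions~(1)--(4) and functionality of $\red_X$; part~(2) by induction on $x\eval_X x'$ using~(1) in the step case. The paper's own proof is just the two-line summary ``(1) follows with Fact~\ref{fact-eval}; (2) follows by induction on $x\eval_X x'$ using~(1)'', and you have unpacked precisely that.
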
  
\begin{proof}
  (1) follows with Fact~\ref{fact-eval}.  
  (2) follows by induction on $x\eval_Xx'$ using~(1).
\qed\end{proof}

We will also refine machines with machines and
rely on a composition theorem that combines
two refinements $A$~to~$B$ and $B$~to~$X$
to a refinement $A$~to~$X$.
We define refinement of machines 
with strict simulation.

\begin{definition}[][refinement_M]
  A \coqemph[refinement_M]{refinement $A$ to $B$} is a
  relation $\gg$ on~$A$ and $B$ such that:
  \begin{enumerate}
  \item If $a\gg b$ and $b$ is reducible with $\red_B$, then
    $a$ is reducible with $\red_A$.
  \item If $a\gg b$ and $a\tred a'$, then there exists $b'$ such that
    $a'\gg b'$ and $b\tred b'$.
  \item If $a\gg b$ and $a\bred a'$, then there exists $b'$ such that
    $a'\gg b'$ and $b\bred b'$.
  \end{enumerate}
\end{definition}

\begin{fact}[Composition][composition]
  \label{fact-ref-comp}
  Let $\gg_1$ be a refinement $A$ to $B$
  and $\gg_2$ be a refinement $B$ to $X$.
  Then the composition 
  $\lam{ac}{\,\exists b.\,a \gg_1 b\land b \gg_2 c$}
  is a refinement $A$ to $X$.
\end{fact}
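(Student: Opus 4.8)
The plan is to verify the four conditions of Definition~\ref{definition-ref} directly for the composed relation $\gg$, where $a\gg c$ unfolds to the existence of an intermediate state $b$ with $a\gg_1 b$ and $b\gg_2 c$. In each case I fix such a witness $b$ and chase the hypothesis first through $\gg_1$ (a machine-to-machine refinement, hence a \emph{strict} simulation) and then through $\gg_2$ (a machine-to-reduction-system refinement). Conditions~(1)--(3) are immediate diagram chases; condition~(4), termination for $\tred$, is the only part that needs an induction.

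For condition~(1), if $c$ is reducible with $\red_X$, then condition~(1) for $\gg_2$ applied to $b\gg_2 c$ makes $b$ reducible with $\red_B$, and condition~(1) for $\gg_1$ applied to $a\gg_1 b$ then makes $a$ reducible with $\red_A$. For condition~(2), given $a\tred a'$, condition~(2) for $\gg_1$ produces $b'$ with $a'\gg_1 b'$ and $b\tred b'$; feeding $b\tred b'$ into condition~(2) for $\gg_2$ leaves the refined term fixed, i.e.\ $b'\gg_2 c$, so $a'\gg_1 b'\gg_2 c$ yields $a'\gg c$. Condition~(3) is analogous: a $\beta$-step $a\bred a'$ is turned by $\gg_1$ into $b'$ with $a'\gg_1 b'$ and $b\bred b'$, and condition~(3) for $\gg_2$ then supplies $c'$ with $b'\gg_2 c'$ and $c\red c'$, whence $a'\gg c'$.

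The main work is condition~(4). From $b\gg_2 c$ and condition~(4) for $\gg_2$ we know that $b$ terminates for $\tred$ in $B$, and we must deduce that $a$ terminates for $\tred$ in $A$. I would prove the strengthened statement ``for every $b$ terminating for $\tred$ in $B$ and every $a$ with $a\gg_1 b$, the state $a$ terminates for $\tred$ in $A$'' by induction on the termination derivation $\M{ter}_{\tred}\,b$. To show that $a$ terminates it suffices, by the single termination rule, to show that every $\tau$-successor $a'$ of $a$ terminates. Given $a\tred a'$, condition~(2) for $\gg_1$ yields $b'$ with $a'\gg_1 b'$ and $b\tred b'$; the inductive hypothesis for the predecessor $b'$ then applies to $a'\gg_1 b'$ and gives termination of $a'$. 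The point that makes this induction go through is precisely the strictness of $\gg_1$: because a $\tau$-step of $A$ is always matched by a genuine $\tau$-step of $B$ (rather than being absorbed silently, as $\tau$-steps are on the $\gg_2$ side), any infinite $\tau$-run in $A$ would descend through infinitely many $\tau$-steps of $B$, contradicting the termination of $b$.
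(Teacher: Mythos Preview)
Your argument is correct and is precisely the canonical verification of the four conditions; in particular, the induction on $\M{ter}_{\tred}\,b$ for condition~(4), exploiting the strictness of $\gg_1$ to produce a genuine $\tau$-successor in $B$ at each step, is the right idea. The paper itself states this fact without proof (delegating it to the Coq development), so there is nothing further to compare.
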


\section{Programs}
\setCoqFilename{Programs}

The machines we will consider execute programs.
Programs may be seen as lists of commands
to be executed one after the other.
Every term can be compiled into a program,
and programs that are images of terms 
can be decompiled.
There are commands for variables, 
abstractions, and applications.
We represent \coqemph[Pro]{programs} with 
a tree-recursive  inductive type so that
the command for abstractions can nest programs:
\begin{align*}
  {\coqlink[Pro]{\N{P},\N{Q},\N{R}~:~\N{\Pro}}}~::=~
  \ret\mid
  \var\,n;P\mid
  \lamb\,Q;P\mid
  \app;P
  \qquad(n:\nat)
\end{align*}

We define a tail recursive
\coqlink[gamma]{\emph{compilation function} $\gamma:\Ter\to\Pro\to\Pro$}
translating terms into programs:
\begin{align*}
  \gamma nP~&:=~\var\,n;P & \gamma(\lambda s)P~&:=~\lamb(\gamma s\ret);P&
  \\
  \gamma(st)P~&:=~\gamma s(\gamma t(\app;P))  &&&&
\end{align*}
The second argument of $\gamma$ 
may be understood as a continuation.

We also define a \coqlink[delta]{\emph{decompilation function} $\delta
  P A$ of type $\Pro\to\List(\Ter)\to\opt(\List(\Ter))$}
translating programs into terms.
The function executes the program 
over a stack of terms.
The optional result acknowledges the fact
that not every program represents a term.  
We write~\emph{$A$} and~\emph{$B$} for lists of terms.
Here are the equations defining the decompilation function:
\begin{align*}
  \delta\,\ret\, A&~:=~\some A\\
  \delta (\var\,n;P)A&~:=~\delta P(n::A)\\
  \delta (\lamb\,Q;P) A&~:=~\delta P (\lambda s::A)
  &&\hspace{-2em}\text{if }\delta\,Q\,\nil=\some{[s]}\\
  \delta (\app;P) A&~:=~\delta P (st::A')
  &&\hspace{-2em}\text{if }A=t::s::A'
\end{align*}
Decompilation inverts compilation:

\begin{fact}[][decompile_correct']\label{fact-decompile}
  $\delta(\gamma sP) A=\delta P(s::A)$.
\end{fact}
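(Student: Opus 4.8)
The plan is to generalize the statement over both the continuation $P$ and the stack $A$, and then argue by induction on the term $s$. Generalizing over $P$ and $A$ is essential: in the recursive cases the induction hypotheses get instantiated at continuations and stacks different from those in the goal, so the claim must be available for arbitrary $P$ and $A$ for the induction to close. Each case then reduces to unfolding the defining equations of $\gamma$ and $\delta$. The variable case $s=n$ is immediate, since both sides compute directly: $\delta(\gamma\,n\,P)A=\delta(\var\,n;P)A=\delta P(n::A)$, which is exactly the right-hand side.

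For the application case $s=s_1s_2$, I would rewrite $\gamma(s_1s_2)P=\gamma s_1(\gamma s_2(\app;P))$ and apply the induction hypothesis twice: first to $s_1$ with continuation $\gamma s_2(\app;P)$ and stack $A$, turning the goal into $\delta(\gamma s_2(\app;P))(s_1::A)$, and then to $s_2$ with continuation $\app;P$ and stack $s_1::A$, turning it into $\delta(\app;P)(s_2::s_1::A)$. The $\app$ equation now fires on the top two stack entries $s_2$ and $s_1$, rebuilding the application $s_1s_2$ and leaving the remaining stack $A$ untouched, so the result is $\delta P(s_1s_2::A)$, as required. For the abstraction case $s=\lambda s'$, I would rewrite $\gamma(\lambda s')P=\lamb(\gamma s'\ret);P$. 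The $\lamb$ clause of $\delta$ carries a side condition, namely that $\delta(\gamma s'\ret)\nil=\some{[t]}$ for some $t$; this is discharged by instantiating the induction hypothesis at the empty continuation $\ret$ and the empty stack $\nil$, which yields $\delta(\gamma s'\ret)\nil=\delta\,\ret\,(s'::\nil)=\some{[s']}$. Hence the side condition holds with $t=s'$, and the clause gives $\delta P(\lambda s'::A)$, again matching the right-hand side.

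I expect no deep obstacle; once the statement is suitably generalized the argument is a routine computation. The two points requiring care are, first, recognizing in the abstraction case that the side condition of $\delta$'s $\lamb$ clause is precisely the instance of the induction hypothesis at $\ret$ and $\nil$, and second, keeping the stack order straight in the application case so that the $\app$ command reconstructs $s_1s_2$ rather than $s_2s_1$.
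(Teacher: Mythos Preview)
Your proposal is correct and matches the paper's approach exactly: the paper's proof is simply ``By induction on $s$,'' and your write-up spells out precisely that induction with the needed generalization over $P$ and $A$. The case analyses you give (including the stack order in the application case and the use of the IH at $\ret,\nil$ to discharge the side condition in the abstraction case) are all accurate.
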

\makeproof{fact-decompile}{
  By induction on $s$.
}

\begin{fact}[][decompile_append]\label{fact-delta-progr-ext-stack}
  Let $\delta PA=\some{A'}$.
  Then $\delta P(A\con A'')=\some{(A'\con A'')}$.
\end{fact}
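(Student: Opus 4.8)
The plan is to prove the fully generalized statement by structural induction on the program $P$: for all stacks $A$, $A'$, and $A''$, if $\delta P A=\some{A'}$ then $\delta P(A\con A'')=\some{(A'\con A'')}$. Generalizing over the stack arguments $A$ and $A'$ \emph{before} the induction is essential, because $\delta$ recurses on $P$ while repeatedly modifying the stack (pushing a variable or an abstraction, or reassociating an application), so the induction hypothesis must apply to whatever stack the recursion produces at each step.

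First I would handle the base case $P=\ret$: here $\delta\,\ret\,A=\some A$ forces $A'=A$, and $\delta\,\ret\,(A\con A'')=\some{(A\con A'')}$ is exactly the claim. The cases $P=\var\,n;P'$ and $P=\lamb\,Q;P'$ are equally direct: decompilation pushes a single entry ($n$, respectively $\lambda s$ with $s$ extracted from $\delta Q\nil$) onto the stack and recurses on $P'$, and since $(x::A)\con A''=x::(A\con A'')$, appending $A''$ commutes with the push, so the induction hypothesis for $P'$ on the enlarged stack closes the case. The one point worth noting in the $\lamb$ case is that its side condition $\delta Q\nil=\some{[s]}$ refers to the fixed empty stack and is therefore completely unaffected by replacing $A$ with $A\con A''$; hence the same branch of the definition is taken in both computations.

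The only case that really exercises the commutation of appending with a stack operation is $P=\app;P'$. Here $\delta(\app;P')A$ is defined only when $A=t::s::A_0$, in which case it equals $\delta P'(st::A_0)$. Since $\delta P A=\some{A'}$ is assumed, $A$ must have this shape, and then $A\con A''=t::s::(A_0\con A'')$ still begins with the same two entries, so the same defining equation fires and yields $\delta P'\bigl(st::(A_0\con A'')\bigr)$. Because $(st::A_0)\con A''=st::(A_0\con A'')$, the induction hypothesis for $P'$ applied to the stack $st::A_0$ gives precisely $\some{(A'\con A'')}$. I expect no genuine difficulty here, as this is a routine structural induction; the main thing to get right is the bookkeeping in this last case, namely that the head pattern of the stack survives the append and that the resulting stacks are associated identically on both sides.
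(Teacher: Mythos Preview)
Your proposal is correct and follows exactly the approach the paper indicates: a structural induction on $P$ with the stack arguments generalized. The paper's proof is the single line ``By induction on $P$'', and your write-up simply spells out the four cases of that induction in full detail.
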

\makeproof{fact-delta-progr-ext-stack}{
  By induction on $P$.
\qed}

We define a predicate $\N{P\gg s}:=~\delta P\nil=\some{[s]}$ 
read as \emph{$P$ represents $s$}.


The naive stack machine will use a
\coqlink[substP]{\emph{substitution operation $\subst PkR$}}
for programs:
\begin{align*}
  \subst{\ret\,}kR
  &~:=~\ret
  &&&  \subst{(\lamb\,Q;P)}kR
  &~:=~\lamb(\subst Q{\natS k}R);\subst PkR&&
  \\
  \subst{(\var\,k;P)}kR
  &~:=~\lamb\,R;\subst PkR
  &&&\subst{(\app;P)}kR
  &~:=~\app;\subst PkR &&
  \\
  \subst{(\var\,n;P)}kR
  &~:=~\var\,n;\subst PkR
  &&\text{if}~n\neq k  &&
\end{align*}
Note the second equation for the variable command
that replaces a variable command with a lambda command.
The important thing to remember here is 
the fact that the program $R$ 
is inserted as the body of a lambda command.

For the verification of the naive stack machine
we need a substitution lemma relating 
term substitution with program substitution.
The lemma we need appears as 
\Cref{corollary-program-subst} below.
We prove the fact with a generalised version
that can be shown by induction on programs.
We use the notation $\N{\subst Aku}:=(\lam s{\subst sku})@A$.

\begin{lemma}[Substitution][substP_rep_subst']
  \label{lem-beta-subst1}
  Let $R\gg t$ and $\delta QA=\some B$.
  Then $\delta\,\subst QkR\,\subst Ak{\lambda t}
  =\some{\subst Bk{\lambda t}}$.
 \end{lemma}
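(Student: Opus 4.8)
The plan is to prove the statement by structural induction on the program $Q$, generalising over the index $k$ and both stacks $A$ and $B$, while keeping $R$, $t$, and the hypothesis $R \gg t$ fixed throughout. Generalising over $k$ and the stacks is essential: under the command $\lamb$ the program substitution descends into the nested body with an incremented index $\natS k$, and every command that decompiles modifies the stack, so the induction hypothesis must be available at arbitrary indices and stacks. In each case the argument is driven by lining up three pieces of data: the head of $\subst Q k R$ computed by the defining equation of program substitution, the step that $\delta$ performs on that head, and the term-substitution equation that rewrites the pushed stack entry.

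For the three routine cases I would simply pass the substitution through the head command and appeal to the induction hypothesis on the continuation $P$. For $Q = \ret$ we have $B = A$ and both sides equal $\some{\subst A k {\lambda t}}$. For $Q = \app; P$, decompilation pops the top two entries $u_2, u_1$ and pushes $u_1 u_2$; since mapping substitution over a list preserves the cons structure, $\subst A k {\lambda t}$ still has this shape, and $\subst{(u_1 u_2)}{k}{\lambda t} = (\subst{u_1}{k}{\lambda t})(\subst{u_2}{k}{\lambda t})$ rewrites the pushed entry so that the induction hypothesis for $P$ on the stack $u_1 u_2 :: A'$ applies. The case $Q = \var\,n; P$ with $n \neq k$ is analogous, using $\subst n k {\lambda t} = n$. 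The variable-hit case $Q = \var\,k; P$ is the first place where $R \gg t$ enters: program substitution turns the head into $\lamb\,R$, and since $R \gg t$ means $\delta R \nil = \some{[t]}$, decompiling this command pushes exactly $\lambda t$, which is $\subst k k {\lambda t}$; the induction hypothesis for $P$ on the stack $k :: A$ then closes the case.

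The main obstacle will be the abstraction case $Q = \lamb\,Q'; P$. Here I would first invert the hypothesis $\delta Q A = \some B$: since the defining equation of $\delta$ on a $\lamb$ command yields a result only when the nested decompilation is a singleton, this gives some $s$ with $\delta Q' \nil = \some{[s]}$ and $\delta P (\lambda s :: A) = \some B$. Program substitution replaces the head by $\lamb(\subst{Q'}{\natS k}{R})$, so I would apply the induction hypothesis to the nested program $Q'$ at the incremented index $\natS k$ and the empty stack, obtaining $\delta(\subst{Q'}{\natS k}{R})\nil = \some{[\subst s {\natS k}{\lambda t}]}$ — again a singleton, as the $\lamb$ equation of $\delta$ requires. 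The de Bruijn bookkeeping $\subst{(\lambda s)}{k}{\lambda t} = \lambda(\subst s {\natS k}{\lambda t})$ then identifies the pushed entry with $\subst{(\lambda s)}{k}{\lambda t}$, turning the stack into $\subst{(\lambda s :: A)}{k}{\lambda t}$, and a final appeal to the induction hypothesis for the continuation $P$ on the stack $\lambda s :: A$ produces $\some{\subst B k {\lambda t}}$. Getting the two nested invocations of the induction hypothesis — one at index $\natS k$ for the body, one at index $k$ for the continuation — to align through the term-substitution equation for $\lambda$ is the delicate point that makes this case the heart of the proof.
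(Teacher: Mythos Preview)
Your proposal is correct and follows essentially the same approach as the paper: induction on $Q$ with $k$, $A$, $B$ generalised, and in the $\lamb$ case two invocations of the inductive hypothesis---once on the body at $\natS k$ with the empty stack to obtain $\subst{Q'}{\natS k}R\gg\subst s{\natS k}{\lambda t}$, and once on the continuation at $k$ with the extended stack. The paper spells out only the $\lamb$ case, but your treatment of the remaining cases (including the variable-hit case using $R\gg t$) is accurate and matches what the induction requires.
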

\makeproof{lem-beta-subst1}{
  By induction on $Q$.
  We show the case for $Q=\lamb\,P;Q$.

  Let $\delta(\lamb\,P;Q)A=\some B$.
  Then $P\gg s$ and $\delta Q(\lambda s::A)=\some B$ for some $s$.
  Now:
  \begin{align*}
  \delta\,\subst{(\lamb\,P;Q)}kR\,
  \subst Ak{\lambda t}
    &~=~
      \delta\,(\lamb\,\subst P{\natS k}R;\subst QkR)\,
      \subst Ak{\lambda t}
    \\&~=~
      \delta\,\subst QkR\,
      (\lambda\subst s{\natS k}{\lambda t}::\subst Ak{\lambda t})
    &&\text{since $\subst P{\natS k}R\gg\subst s{\natS k}{\lambda t}$ 
       by ind. hyp. for $P$}
    \\&~=~
      \delta\,\subst QkR\,
      \subst {(\lambda s::A)}k{\lambda t}
    \\&~=~\some{\subst Bk{\lambda t}}
    &&\text{ind. hyp. for $Q$}
  \end{align*}
  \qed
}

\begin{corollary}[Substitution]
  \label{corollary-program-subst}
  If $P\gg s$ and $Q\gg t$, 
then $\subst PkQ\gg\subst sk{\lambda t}$.
\end{corollary}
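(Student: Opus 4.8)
The plan is to derive the corollary directly from \Cref{lem-beta-subst1} by instantiating its two hypotheses. Recall that the predicate $P\gg s$ unfolds to $\delta P\nil=\some{[s]}$, and similarly $Q\gg t$ means $\delta Q\nil=\some{[t]}$. I want to conclude $\subst PkQ\gg\subst sk{\lambda t}$, i.e. $\delta\,\subst PkQ\,\nil=\some{[\subst sk{\lambda t}]}$. So the task is to feed the hypotheses into the lemma with the right choice of its free variables.

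First I would apply \Cref{lem-beta-subst1} with $R:=Q$, $t:=t$, $Q:=P$, $A:=\nil$, and $B:=[s]$. The lemma requires $R\gg t$, which is exactly the hypothesis $Q\gg t$, and it requires $\delta QA=\some B$, which here reads $\delta P\nil=\some{[s]}$, exactly the hypothesis $P\gg s$. Thus both premises are discharged, and the lemma yields
\begin{align*}
  \delta\,\subst PkQ\,\subst\nil k{\lambda t}
  &~=~\some{\subst{[s]}k{\lambda t}}.
\end{align*}

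Next I would simplify the two substitution-over-list expressions. Since $\subst Aku=(\lam s{\subst sku})@A$ is just a map, substitution into the empty stack gives $\subst\nil k{\lambda t}=\nil$, and substitution into the singleton gives $\subst{[s]}k{\lambda t}=[\subst sk{\lambda t}]$. Rewriting with these two equalities turns the conclusion of the lemma into $\delta\,\subst PkQ\,\nil=\some{[\subst sk{\lambda t}]}$, which is precisely $\subst PkQ\gg\subst sk{\lambda t}$, as required.

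I do not expect any real obstacle here: the corollary is a straightforward specialisation of the substitution lemma to the empty stack, and the only thing to check is that the map-based notation $\subst{\cdot}k{\lambda t}$ computes correctly on $\nil$ and on a one-element list. The conceptual work has already been done in proving \Cref{lem-beta-subst1} by induction on $Q$; the corollary merely packages it in the form needed for the verification of the naive stack machine.
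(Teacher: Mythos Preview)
Your proposal is correct and is exactly the intended derivation: the paper states the corollary immediately after \Cref{lem-beta-subst1} without further proof, and the obvious specialisation you carry out (instantiating with $A=\nil$, $B=[s]$ and simplifying the mapped substitution on $\nil$ and $[s]$) is precisely how it follows.
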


We define a \coqlink[boundP]{\emph{bound predicate $P\bnd k$}}
for programs 
that is analogous to the bound predicate for terms and say that a program $P$ is \coqemph[closedP]{closed} if $P\bnd0$:
\begin{mathpar}
  \inferrule*{~}{\ret\bnd k}
  \and
  \inferrule*{n<k\\P\bnd k}{\var\,n;P\bnd k}
  \and
  \inferrule*
  {Q \bnd\natS k \\P\bnd k}
  {\lamb\,Q;P\bnd k}
  \and
  \inferrule*{P\bnd k}{\app;P\bnd k}
\end{mathpar}

\begin{fact}[][bound_compile]\label{fact-gamma-bnd}
  If $s\bnd k$ and $P\bnd k$,
  then $\gamma sP\bnd k$.
\end{fact}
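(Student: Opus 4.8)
The plan is to prove the statement by structural induction on the term~$s$, but with the bound index~$k$ and the continuation program~$P$ universally quantified before the induction is started. Generalising over~$k$ and~$P$ is essential: the recursion of~$\gamma$ in the abstraction case descends under a binder (so the induction hypothesis must be available at index~$\natS k$), and in the application case it threads a different continuation. Concretely, I would fix the goal as ``for all $k$ and $P$, if $s\bnd k$ and $P\bnd k$ then $\gamma sP\bnd k$'' and then induct on~$s$. Each case inverts the hypothesis $s\bnd k$, which (since the bound predicate is syntax-directed) yields exactly the premises of the matching constructor, unfolds the corresponding defining equation of~$\gamma$, and then builds the required derivation of $\gamma sP\bnd k$ using the program bound rules together with the induction hypotheses.

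For $s=n$ a variable, inverting $n\bnd k$ gives $n<k$; since $\gamma nP=\var\,n;P$, the program rule for $\var$ combined with $n<k$ and the assumption $P\bnd k$ yields $\var\,n;P\bnd k$ directly. For $s=s_1s_2$ an application, inverting gives $s_1\bnd k$ and $s_2\bnd k$, and $\gamma(s_1s_2)P=\gamma s_1(\gamma s_2(\app;P))$; I first obtain $\app;P\bnd k$ from $P\bnd k$ by the $\app$ rule, then the induction hypothesis for $s_2$ (at the same index~$k$, with continuation $\app;P$) gives $\gamma s_2(\app;P)\bnd k$, and finally the induction hypothesis for $s_1$ (at index~$k$, with that program as continuation) gives the result. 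For $s=\lambda s_0$, inverting gives $s_0\bnd\natS k$, and $\gamma(\lambda s_0)P=\lamb(\gamma s_0\ret);P$; here I apply the induction hypothesis for~$s_0$ \emph{at index $\natS k$} with continuation $\ret$ (using $\ret\bnd\natS k$ from the $\ret$ rule) to get $\gamma s_0\ret\bnd\natS k$, and then the $\lamb$ rule together with $P\bnd k$ concludes $\lamb(\gamma s_0\ret);P\bnd k$.

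I do not expect a genuine obstacle in this proof; it is a routine syntax-directed induction. The only point requiring care is the one already flagged: the induction must be set up so that the hypothesis applies at an \emph{arbitrary} index and continuation, since the abstraction case shifts~$k$ to~$\natS k$ and resets the continuation to~$\ret$, while the application case reuses~$k$ but feeds a compiled subprogram as the continuation. Once the generalisation is in place, every case closes by one inversion of $s\bnd k$, one unfolding of~$\gamma$, and an application of the corresponding program bound rule. \qed
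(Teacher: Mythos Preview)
Your proposal is correct and matches the paper's approach exactly: the paper's proof is simply ``By induction on $s$,'' and your write-up spells out precisely that induction, with the necessary generalisation over $k$ and $P$ and the three routine cases.
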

\makeproof{fact-gamma-bnd}{
  By induction on $s$.
}
It follows that $\gamma sP$ is closed 
whenever $s$ and $P$ are closed.

\section{Naive Stack Machine}
\label{sec:naive-stack-machine}
\setCoqFilename{M_stack}

The naive stack machine executes programs
using two stacks of programs 
called \emph{control stack} 
and \emph{argument stack}.
The control stack holds the programs to be executed,
and the argument stack holds the programs 
computed so far.
The machine executes the first command
of the first program on the control stack
until the control stack is empty 
or execution of a command fails.

The \coqlink[stateS]{\emph{states of the naive stack machine}} are pairs
\begin{align*}
  \N{(T,V)}&~:~\List(\Pro)\times\List(\Pro)
\end{align*}
consisting of two lists $T$ and $V$ representing
the control stack and the argument stack.
We use the letters $T$ and $V$ since 
we think of the items on $T$ as tasks
and the items on $V$ as values.
The \coqlink[stepS]{\emph{reduction rules of the naive stack machine}}
appear in Figure~\ref{fig:stack-red}.
The parentheses for states 
are omitted for readability.
We will refer to the rules as
\emph{return rule},
\emph{lambda rule}, and
\emph{application rule}.
The return rule 
removes the trivial program from the control stack.
The lambda rule
pushes a program representing an abstraction 
on the argument stack.
Note that the programs on the control stack 
are executed as they are.
This is contrast to the programs on
the argument stack that represent
bodies of abstractions.
The application rule
takes two programs from the argument stack
and pushes an instantiated program obtained by $\beta$-reduction 
on the control stack.
This way control is passed from
the calling program to the called program.
There is no reduction rule for the variable command \todo{explain why}
since we will only consider states that represent closed terms.

\begin{figure}[t] 
  \begin{align*}
    \ret::T,~V
    &~\tred~T,~V
    \\
    (\lamb\,Q;P)::T,~V
    &~\tred~P::T,~Q::V
    \\
    (\app;P)::T,~R::Q::V
    &~\bred~\subst Q0R::P::T,~V
  \end{align*}
  \vspace{-7mm}
  \caption{Reduction rules of the naive stack machine}
  \label{fig:stack-red}
\end{figure}

\begin{fact}[][tau_functional]\label{fact-stack-beta-tau-fun-term}
  The relations $\tred$, $\bred$, 
  and ${\tred}\cup {\bred}$
  are functional and computable.  
  Moreover, the relations $\tred$ and $\bred$
  are terminating.
\end{fact}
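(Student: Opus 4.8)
The plan is to dispatch every claim by a case analysis on the first program of the control stack, since each rule is driven by the head command of the first task.

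\textbf{Functionality and computability.} First I would observe that each rule fires only for a specific shape of the state and that these shapes are pairwise disjoint: the return and lambda rules require the first program on the control stack to begin with $\ret$ and $\lamb$ respectively, while the application rule requires it to begin with $\app$ (and the argument stack to hold at least two programs). Hence a state admits at most one applicable rule, and in each case the successor is read off uniquely from the right-hand side; so $\tred$, $\bred$, and their union ${\tred}\cup{\bred}$ are all functional. For computability I would exhibit, for each relation, an $\opt$-valued decision function on states that pattern-matches on the control stack and on the head command (returning $\none$ on the shapes with no rule); the defining disjunction of computability then holds by the same case analysis. For the union the two partial functions have disjoint domains and are merged.

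\textbf{Termination of $\tred$.} Both $\tred$-rules rewrite only the control stack, so I would take $\mu(T,V)$ to be the total number of command constructors occurring in the programs of the control stack $T$, ignoring the argument stack. The return rule drops a $\ret$ command, decreasing $\mu$ by $1$; the lambda rule replaces $\lamb\,Q;P$ by $P$ on the control stack (pushing $Q$ onto the argument stack, which $\mu$ does not count), decreasing $\mu$ by $1+\|Q\|>0$. Thus every $\tred$-step strictly decreases $\mu$, and termination of $\tred$ follows by well-founded induction on $\mu$ (equivalently, establishing $\M{ter}_{\tred}$ at every state).

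\textbf{Termination of $\bred$ and the main obstacle.} The control-stack measure is useless for $\bred$, since the application rule pushes two programs $\subst Q0R$ and $P$ and may grow the control stack. The key observation is that the application rule only consumes the argument stack and never extends it, turning $R::Q::V$ into $V$; hence the length $|V|$ strictly decreases (by two) with each $\bred$-step, and termination of $\bred$ follows by well-founded induction on $|V|$. There is no single deep step here; the only real subtlety is that the two transition relations call for two different measures. The lambda rule moves material from the control stack to the argument stack, so $|V|$ is not decreased by $\tred$, while the application rule moves material the other way and grows the control stack, so $\mu$ is not decreased by $\bred$. Recognising that each relation shrinks exactly one of the two stacks is the crux, and it also explains why no uniform measure is sought for the union, which indeed need not terminate (for instance on the compilation of a diverging term).
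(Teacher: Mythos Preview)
Your argument is correct. The paper states this fact without proof (it is treated as routine and left to the Coq development), so there is nothing to compare against; your case analysis for functionality/computability and your two separate measures---total command size of the control stack for $\tred$ and length of the argument stack for $\bred$---are exactly the natural choices and work as you describe. Your closing remark that no joint measure is attempted because ${\tred}\cup{\bred}$ need not terminate is also the right explanation.
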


We decompile machine states by executing
the task stack on the stack of terms 
obtained by decompiling the programs
on the value stack.
To this purpose we define two decompilation functions.
The \coqlink[deltaV]
{\emph{decompilation function $\delta V$ for argument stacks}} has type $\List(\Pro)\to\opt(\List(\Ter))$
and satisfies the equations
\begin{align*}
  \delta\nil
  &~:=~\some\nil
  \\
  \delta(P::V)
  &~:=~\some{(\lambda s::A)}
  &&\text{if}~P\gg s
     ~\text{and}~
     \delta V=\some{A}
\end{align*}
Note that the second equation turns 
the term~$s$ obtained from a program on the argument stack
into the abstraction $\lambda s$.
This accounts for the fact that programs
on the argument stack represent bodies of abstractions.
The \coqlink[deltaT]
{\emph{decompilation function  $\delta T A$ for control stacks}} 
has type $\List(\Pro) \to\List(\Ter)\to\opt(\List(\Ter))$
and satisfies the equations
\begin{align*}
  \delta\nil A
  &~:=~\some A
  \\
  \delta(P::T)A
  &~:=~\delta TA'
  &&\text{if}~\delta PA=\some{A'}
\end{align*}

We now define the \emph{refinement relation}
between states of the naive stack machine
and terms as follows:
\begin{align*}
  \coqlink[repsSL]{\N{(T,V)\gg s}}
  &~:=~\exists A.~\delta V=\some A~\land~\delta TA=\some{[s]}
\end{align*}
We will show that $(T,V)\gg s$ is in fact a refinement.

\begin{fact}[][repsSL_functional]
  \label{fact-stack-ref-fun-comp}
  $(T,V)\gg s$ is functional and computable.
\end{fact}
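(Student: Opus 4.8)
The plan is to reduce both properties to the determinism of the two decompilation functions $\delta V : \List(\Pro)\to\opt(\List(\Ter))$ and $\delta T A : \List(\Pro)\to\List(\Ter)\to\opt(\List(\Ter))$. Although the refinement $(T,V)\gg s$ is defined with an existentially quantified stack $A$, that $A$ is not genuinely free: the equation $\delta V=\some A$ pins it down uniquely, because $\delta V$ is a function (its clauses recurse structurally on $V$, and the side condition $P\gg s$ is itself $\delta P\,\nil=\some{[s]}$, again determined by a function). So the whole relation is morally a partial function computed by ``run $\delta V$, then run $\delta T$ on the result, then check that the result is a singleton.''

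For functionality I would assume $(T,V)\gg s$ and $(T,V)\gg s'$ and unfold both. This yields stacks $A,A'$ with $\delta V=\some A$ and $\delta V=\some{A'}$, so $A=A'$ by determinism of $\delta V$; then $\delta T A=\some{[s]}$ and $\delta T A=\some{[s']}$, so $[s]=[s']$ by determinism of $\delta T A$, whence $s=s'$. No induction is needed once the decompilation functions are known to be functional, which holds by construction.

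For computability I would exhibit the decision function $f(T,V)$ that first evaluates $\delta V$: if the result is $\none$ it returns $\none$; if it is $\some A$ it evaluates $\delta T A$; if that is $\some{[s]}$ (a one-element list) it returns $\some s$, and in every other case (including $\delta T A=\none$, or a list whose length is not $1$) it returns $\none$. Correctness of $f$ against the specification of computable relations follows by case analysis on the two option-valued results: when $f(T,V)=\some s$ the witness $A$ makes $(T,V)\gg s$ hold directly, and when $f(T,V)=\none$ any hypothetical witness for $(T,V)\gg s$ would force one of the intermediate computations to take a value the function already ruled out, contradicting determinism.

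I do not expect a genuine obstacle here; the only point requiring care is the singleton test in the computable witness, since $\delta T A$ may legitimately return a list whose length differs from one (for which no term is represented), so $f$ must reject those cases rather than picking an arbitrary element. Everything else is bookkeeping on top of the already-established fact that $\delta V$ and $\delta T$ are functions.
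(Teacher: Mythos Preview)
Your argument is correct and is exactly the natural one: functionality follows because the existential witness $A$ is pinned down by the function $\delta V$, and computability is witnessed by composing $\delta V$ with $\delta T$ and testing for a singleton result. The paper states this fact without proof, treating it as immediate from the definitions, so there is nothing further to compare.
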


\begin{fact}[$\tau$-Simulation][tau_simulation] \label{fact-stack-tau-sim}
  If $(T,V)\gg s$ and $T,V\tred T',V'$, 
    then $(T',V')\gg s$.
\end{fact}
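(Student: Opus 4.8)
The plan is to proceed by inversion on the $\tau$-transition $T,V\tred T',V'$. Since $\tred$ is given by exactly two rules, the return rule and the lambda rule, this yields two cases. In each case I unfold the refinement $(T,V)\gg s$ to obtain a term stack $A$ with $\delta V=\some A$ (argument-stack decompilation) and $\delta TA=\some{[s]}$ (control-stack decompilation), and then I exhibit a suitable witness stack establishing $(T',V')\gg s$. Throughout, the subtlety to keep in view is that the symbol $\delta$ denotes three different decompilation functions (for programs, for control stacks, and for argument stacks), so each unfolding step must be attributed to the right one.

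For the return rule, where $(\ret::T,V)\tred(T,V)$, the same $A$ works. By the defining equation $\delta\,\ret\,A=\some A$ of program decompilation, the control-stack decompilation satisfies $\delta(\ret::T)A=\delta TA$. Hence $\delta TA=\some{[s]}$ follows directly from the hypothesis, while $\delta V=\some A$ is unchanged, and $(T,V)\gg s$ is immediate.

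The lambda rule, where $((\lamb\,Q;P)::T,V)\tred(P::T,Q::V)$, is the main case. From $\delta((\lamb\,Q;P)::T)A=\some{[s]}$ being defined (rather than $\none$), I first read off, via the single defining equation for program decompilation of a $\lamb$-command and the convention that omitted equations default to $\none$, that $Q\gg s'$ for some term $s'$, i.e.\ $\delta Q\nil=\some{[s']}$, and moreover that $\delta((\lamb\,Q;P)::T)A=\delta T A''$ where $\delta P(\lambda s'::A)=\some{A''}$. I then choose $\lambda s'::A$ as the witness stack for the new state: the argument-stack equation gives $\delta(Q::V)=\some{(\lambda s'::A)}$ from $Q\gg s'$ together with $\delta V=\some A$, and the control-stack decompilation of the new control stack computes $\delta(P::T)(\lambda s'::A)=\delta T A''=\some{[s]}$, using the very same $A''$. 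Thus both states decompile through literally the same computation $\delta T A''$, and $(P::T,Q::V)\gg s$ follows.

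I expect the lambda case to be the only real obstacle. It requires carefully extracting the existence of $s'$ with $Q\gg s'$ from mere definedness of the program decompilation, and then recognising the crucial matching fact: moving $Q$ from the body of the $\lamb$-command on the control stack to the top of the argument stack produces exactly the same term $\lambda s'$ on the intermediate term stack, so the remaining control stack $T$ is run against an identical stack in both states. The return case, by contrast, is purely a one-line computation.
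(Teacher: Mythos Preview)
Your proposal is correct and follows essentially the same approach as the paper: case analysis on the two $\tau$-rules, then verifying that the decompilation computations coincide. The only difference is presentation---the paper writes the lambda case as a single equational chain that nests the optional results (thus implicitly handling the definedness side conditions), whereas you explicitly name the intermediate witnesses $A$, $s'$, and $A''$; but the underlying argument is identical.
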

\begin{proof}
  We prove the claim for the second $\tau$-rule,
  the proof for the first $\tau$-rule is similar.
  Let $\lamb\,Q;P::T,\,V~\tred~P::T,\,Q::V$.
  We have
  \begin{align*}
    \delta(\lamb\,Q;P::T)(\delta V)~=~
    &\delta T(\delta(\lamb\,Q;P)(\delta V))
    \\
    ~=~
    &\delta T (\delta P (\lambda s::\delta V))
    &&Q\gg s 
   \\
    ~=~
    &\delta (P::T)(\delta (Q::V))\tag*{\qed}
  \end{align*}
\end{proof}

Note that the equational part of the proof
nests optional results
to avoid cluttering with side conditions and 
auxiliary names.

Proving that L can simulate $\beta$-steps
of the naive stack machine takes effort.

\begin{fact}[][decompileArg_abstractions]\label{fact-delta-V-abs}
  If $\delta V=\some A$,
  then every term in $A$ is an abstraction.
\end{fact}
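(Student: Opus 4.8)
The plan is to prove the claim by structural induction on the argument stack $V : \List(\Pro)$. The whole statement hinges on a single syntactic observation: the only defining equation of $\delta$ for argument stacks that can produce a \emph{nonempty} result prepends a term of the explicitly abstracted form $\lambda s$. So every element ever placed onto the decompiled stack is an abstraction by construction, and there is essentially nothing else to check.

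In the base case $V = \nil$ we have $\delta\nil = \some\nil$, so the hypothesis $\delta V = \some A$ forces $A = \nil$, and the claim holds vacuously since the empty list contains no terms.

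For the inductive step take $V = P :: V'$ and assume $\delta(P::V') = \some A$. By the convention that omitted equations default to $\none$, this application can only succeed via the second equation, which requires both $P \gg s$ for some term $s$ and $\delta V' = \some{A'}$ for some $A'$, and yields $A = \lambda s :: A'$. The head $\lambda s$ is an abstraction by inspection. Applying the induction hypothesis to $V'$ (using $\delta V' = \some{A'}$) shows that every term in $A'$ is an abstraction. Hence every term in $A = \lambda s :: A'$ is an abstraction, completing the induction.

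I do not expect any real obstacle here; the only point requiring care is to read the definition of $\delta$ for argument stacks under the omitted-$\none$-equations convention, so that a successful decompilation of a cons cell is guaranteed to expose an abstraction at the head. Formally this amounts to an inversion on the equation defining $\delta(P::V')$ together with the induction hypothesis for $V'$.
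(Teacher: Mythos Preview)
Your proof is correct and is precisely the natural structural induction on $V$ that the definition of the argument-stack decompilation function invites; the paper states this fact without proof, and your argument is the expected one.
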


\begin{fact}\label{lem-beta-app}
  $\delta(\app;P::T)(t::s::A)=\delta(P::T)(st::A)$.
\end{fact}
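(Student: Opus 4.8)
The plan is to prove this by directly unfolding the two decompilation functions, with no induction required; the fact is a pure definitional identity threading an optional value. First I would rewrite the left-hand side. Since the argument stack $t::s::A$ already has the shape required by the application equation for the single-program decompilation (namely $t::s::(\cdots)$), that equation gives $\delta\,(\app;P)\,(t::s::A) = \delta P\,(st::A)$. Plugging this into the cons-equation for the control-stack decompilation, $\delta\,((\app;P)::T)\,(t::s::A)$ reduces to $\delta T$ applied to $\delta P\,(st::A)$.

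Next I would rewrite the right-hand side. By the cons-equation for the control-stack decompilation, $\delta\,(P::T)\,(st::A)$ likewise reduces to $\delta T$ applied to $\delta P\,(st::A)$. Since both sides are now ``$\delta T$ run on the result of $\delta P\,(st::A)$'', they agree.

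The only point that needs care is that the cons-equation for the control-stack decompilation carries the side condition $\delta P\,A = \some{A'}$, so the intermediate value $\delta P\,(st::A)$ is passed along as an option rather than as a plain list. I would therefore make a case distinction on whether $\delta P\,(st::A)$ equals $\some{A'}$ or $\none$: in the first case both sides evaluate to $\delta T\,A'$, and in the second case both sides evaluate to $\none$ by the convention that omitted equations default to $\none$. I do not expect any real obstacle here—the statement is essentially a one-line rewrite, and threading the optional result through the side condition is the only subtlety worth spelling out. This fact then serves as the key computational step feeding into the $\beta$-simulation argument for the naive stack machine.
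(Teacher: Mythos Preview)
Your proposal is correct and matches the paper's own proof, which is the one-line chain
\[
\delta(\app;P::T)(t::s::A)=\delta T(\delta(\app;P)(t::s::A))=\delta T(\delta P(st::A))=\delta(P::T)(st::A).
\]
The paper simply suppresses the option threading by writing $\delta T$ applied to an optional result (it comments elsewhere that it ``nests optional results to avoid cluttering with side conditions''), whereas you spell out the $\some{}/\none$ case distinction explicitly; this is the same argument at a slightly different level of formality.
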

\makeproof{lem-beta-app}{
  $\delta(\app;P::T)(t::s::A)=
  \delta T(\delta(\app;P)(t::s::A))=
  \delta T(\delta P(st::A))=
  {\delta(P::T)(st:A)}$.
}

\begin{fact}\label{fact-delta-prog-ext}
  $\delta(P::T)A=\delta T(s::A)$ if $P\gg s$.
\end{fact}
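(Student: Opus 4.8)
The plan is to unfold the definition of control-stack decompilation and then reduce the whole claim to Fact~\ref{fact-delta-progr-ext-stack}. Recall that control-stack decompilation is defined by $\delta(P::T)A = \delta TA'$ whenever $\delta PA = \some{A'}$. Hence the statement hinges entirely on computing $\delta PA$ for the arbitrary stack $A$. The hypothesis $P\gg s$ unfolds to $\delta P\nil = \some{[s]}$, but this only pins down the value of $\delta P$ on the \emph{empty} stack, so the work is to transport this to an arbitrary stack $A$.

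First I would instantiate Fact~\ref{fact-delta-progr-ext-stack} (decompile\_append) with $\nil$ in the role of its first argument and $A$ as the extension: from $\delta P\nil = \some{[s]}$ we obtain $\delta P(\nil\con A) = \some{([s]\con A)}$. Rewriting $\nil\con A = A$ and $[s]\con A = s::A$ turns this into $\delta PA = \some{(s::A)}$. This is the only nontrivial step, and it is immediate once one notices that the append lemma is exactly the tool that promotes single-stack information to an arbitrary stack.

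Finally, feeding $\delta PA = \some{(s::A)}$ into the defining equation for control stacks yields $\delta(P::T)A = \delta T(s::A)$, which is the claim. No induction on $P$ or $T$ is required; the recursion is already packaged inside Fact~\ref{fact-delta-progr-ext-stack}. I do not anticipate any genuine obstacle here: the proof is a one-line rewrite, and the only thing to get right is matching the empty stack and the extension $A$ to the two list arguments of the append lemma so that the cons $s::A$ emerges cleanly.
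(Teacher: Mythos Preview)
Your proof is correct and matches the paper's own argument, which simply says ``Follows with Fact~\ref{fact-delta-progr-ext-stack}.'' You have spelled out exactly the intended instantiation (empty stack plus extension $A$) and the unfolding of control-stack decompilation.
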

\begin{proof}
  Follows with Fact~\ref{fact-delta-progr-ext-stack}.
\qed\end{proof}

\begin{lemma}[Substitution][substP_rep_subst]
  \label{lem-beta-subst}
  $\delta(\subst Q0R::T)A=\delta T(\subst s0{\lambda t}::A)$
  if $Q\gg s$ and $R\gg t$. 
\end{lemma}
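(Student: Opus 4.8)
The plan is to reduce the statement to a single application of the control-stack unfolding fact together with the substitution corollary. Observe that the decompilation of a control stack processes its head program first: by the defining equation for $\delta(P::T)A$, computing $\delta(\subst Q0R::T)A$ amounts to decompiling the head program $\subst Q0R$ over the term stack $A$ and then running $T$ on the resulting stack. Fact~\ref{fact-delta-prog-ext} packages exactly this step, stating that $\delta(P::T)A=\delta T(u::A)$ whenever $P\gg u$. Hence it suffices to establish that the substituted program $\subst Q0R$ represents the substituted term $\subst s0{\lambda t}$, that is, $\subst Q0R\gg\subst s0{\lambda t}$.

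This representation claim is precisely Corollary~\ref{corollary-program-subst} instantiated at $k=0$, taking the corollary's two programs to be our $Q$ and $R$: from the hypotheses $Q\gg s$ and $R\gg t$ it delivers $\subst Q0R\gg\subst s0{\lambda t}$. Feeding this into Fact~\ref{fact-delta-prog-ext} with $P:=\subst Q0R$ and $u:=\subst s0{\lambda t}$ yields $\delta(\subst Q0R::T)A=\delta T(\subst s0{\lambda t}::A)$, which is exactly the goal. No fresh induction is needed at this level; the argument is just the composition of the two earlier results.

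I expect no genuine obstacle here, because the real work has already been discharged upstream in Lemma~\ref{lem-beta-subst1}, the generalised substitution lemma proved by induction on $Q$. That lemma is where the delicate interaction lives: program substitution $\subst{\cdot}{k}{R}$ turns a variable command into a lambda command carrying $R$ as its body, while on the decompilation side the stack $A$ must simultaneously become $\subst Ak{\lambda t}$, and the two must stay synchronised through the tree-recursive structure of programs. Once Lemma~\ref{lem-beta-subst1} and its Corollary~\ref{corollary-program-subst} are available, the present lemma is bookkeeping: line up the corollary, which speaks about representing single terms via $\gg$ on the empty stack, with the control-stack unfolding step of Fact~\ref{fact-delta-prog-ext}. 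The only care required is to match the $k=0$ instance of the corollary with the index $0$ appearing in $\subst Q0R$ and $\subst s0{\lambda t}$, which it does.
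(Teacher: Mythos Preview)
Your proof is correct and follows exactly the paper's own argument: apply Corollary~\ref{corollary-program-subst} at $k=0$ to obtain $\subst Q0R\gg\subst s0{\lambda t}$, then conclude by Fact~\ref{fact-delta-prog-ext}. The surrounding commentary about where the real work lives (Lemma~\ref{lem-beta-subst1}) is accurate but not needed for the proof itself.
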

\begin{proof}
  By \Cref{corollary-program-subst}
  we have $\subst Q0R\gg\subst s0{\lambda t}$.
  The claim follows with Fact~\ref{fact-delta-prog-ext}.
\qed\end{proof}

We also need a special 
reduction relation \coqemph[stepLs]{$A\red A'$} 
for term lists:
\begin{mathpar}
  \inferrule*
  {s\red s'\\\forall t\in A.~t~\text{is an abstraction}}
  {s::A\red s'::A}
  \and
  \inferrule*
  {A\red A'}
  {s::A\red s::A'}
\end{mathpar}
Informally,
$A\red A'$ holds if $A'$ can be obtained from $A$
by reducing the term in $A$ 
that is only followed by abstractions.

\begin{lemma}[][stepLs_decomp]\label{lem-beta-red}
  Let $A\red A'$ and $\delta PA=\some B$.
  Then $\exists B'.~B\red B'\land\delta PA'=\some{B'}$.
\end{lemma}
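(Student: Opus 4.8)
The plan is to prove this by induction on the program $P$, generalising over the stacks $A$, $A'$ and the result $B$ so that the induction hypothesis can be applied to modified stacks. In each case the defining equation of $\delta$ determines how $A$ is transformed before the recursive call, and the work is to transport the reduction $A\red A'$ across that transformation. The case $P=\ret$ is immediate: $\delta\ret A=\some A$ and $\delta\ret A'=\some{A'}$, so $B=A$ and $B'=A'$ work and $B\red B'$ is the hypothesis. For $P=\var n;P'$ and $P=\lamb Q;P'$ decompilation merely pushes one term onto the stack---the variable $n$, respectively the abstraction $\lambda s$ obtained from the side condition $\delta Q\nil=\some{[s]}$---and recurses. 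Because the second rule of list reduction allows prepending an arbitrary term, $A\red A'$ yields $n::A\red n::A'$ (resp.\ $\lambda s::A\red\lambda s::A'$), and the induction hypothesis for $P'$ closes the case; the side condition of the lambda case does not mention $A$, so it transfers unchanged to $A'$.

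The real work is the case $P=\app;P'$. Here $\delta PA=\some B$ can only hold if $A=t::s::A''$, and then $\delta P'(st::A'')=\some B$: the application command fuses the top two stack entries, the function $s$ and the argument $t$, into the single term $st$. I would then invert $A\red A'$, obtaining three subcases: (i) the argument $t$ reduces, with all of $s::A''$ abstractions; (ii) the function $s$ reduces, with all of $A''$ abstractions; (iii) the reduction lies inside $A''$. In each subcase the goal is to produce one reduction of the fused stack $st::A''$ and then invoke the induction hypothesis for $P'$, reading off $\delta PA'$ from the same equation of $\delta$.

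The main obstacle is subcase (i). There I must lift $t\red t'$ to $st\red st'$, and the L-congruence for the argument position, $t\red t'\Rightarrow(\lambda s)t\red(\lambda s)t'$, fires only when the function $s$ is an abstraction---which is exactly the premise ``every element of $s::A''$ is an abstraction'' attached to the first list-reduction rule. With $st\red st'$ established and $A''$ all abstractions, that same first rule gives $st::A''\red st'::A''$, and the induction hypothesis finishes the case. Subcase (ii) is easier, relying on the unconditional function-position congruence $s\red s'\Rightarrow st\red s't$, and subcase (iii) uses only the second list-reduction rule to get $st::A''\red st::A'''$. The point of the statement is that the abstraction side conditions threaded through list reduction are precisely what the term congruence rules of L require once $\app$ has fused two neighbouring stack entries.
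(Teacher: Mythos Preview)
Your proposal is correct and follows the same approach as the paper: induction on $P$, with the only interesting case being $\app;P'$, handled by inverting the list reduction on $t::s::A''$ into the same three subcases and then appealing to the induction hypothesis on the fused stack. Your write-up is in fact more explicit than the paper's about why each subcase goes through (in particular, why the abstraction side condition in subcase~(i) is exactly what the argument-position congruence of L needs), but the structure is identical.
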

\begin{proof}
  By induction on $P$.
  We consider the case $P=\app;P$.

  Let $\delta(\app;P)(t::s::A)=\some B$
  and $t::s::A\red t'::s'::A'$.
  Then $\delta P(st::A)=\some B$
  and $st::A\red s't'::A'$
  (there are three cases:
  (1) $t=t'$, $s=s'$, and $A\red A'$; 
  (2)~$t=t'$, $s>s'$,  $A=A'$, and
  $A$ contains only abstractions; 
  (3) $t\red t'$, ${s::A=s'::A'}$, and
  $s::A$ contains only abstractions).
  By the inductive hypothesis we have
  $B\red B'$ and ${\delta P(s't'::A')=\some{B'}}$
  for some $B'$.
  Thus $\delta(\app;P)(t'::s'::A')=\some{B'}$.
\qed\end{proof}

\begin{fact}[$\beta$-Simulation][beta_simulation]
  \label{fact-beta-corr}~\\
  If $(T,V)\gg s$ and $T,V\bred T',V'$, 
  then ${\exists s'.~(T',V')\gg s'\land s\red s'}$.\todo{proofread partial replacement of s with u by second pair of eyes} \todo{Proof: why exists final s'?}
\end{fact}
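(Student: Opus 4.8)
The plan is to analyse the single $\beta$-rule. Since $T,V\bred T',V'$ must be an instance of the application rule, we have $T=(\app;P)::T_0$, $V=R::Q::V_0$, $T'=\subst Q0R::P::T_0$, and $V'=V_0$. From $(T,V)\gg s$ we obtain $A$ with $\delta V=\some A$ and $\delta TA=\some{[s]}$. Unfolding the argument-stack decompilation twice on $\delta V=\delta(R::Q::V_0)$ forces $A=\lambda r::\lambda q::A_0$ for some $r,q,A_0$ with $R\gg r$, $Q\gg q$, and $\delta V_0=\some{A_0}$. Note that $\delta V'=\delta V_0=\some{A_0}$ already, so $A_0$ is the obvious witness for the new refinement.

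Next I would rewrite the head of the decompiled control stack. Applying Fact~\ref{lem-beta-app} to $\delta TA=\delta((\app;P)::T_0)(\lambda r::\lambda q::A_0)$ gives $\delta(P::T_0)((\lambda q)(\lambda r)::A_0)=\some{[s]}$. The head term $(\lambda q)(\lambda r)$ is exactly an L-redex, with $(\lambda q)(\lambda r)\red\subst q0{\lambda r}$. Moreover, every term in $A_0$ is an abstraction by Fact~\ref{fact-delta-V-abs} (since $\delta V_0=\some{A_0}$), so the first rule of the term-list reduction applies and yields $(\lambda q)(\lambda r)::A_0\red\subst q0{\lambda r}::A_0$.

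It remains to push this list reduction through the control stack $P::T_0$. Since Lemma~\ref{lem-beta-red} is stated for a single program, I would first lift it to control stacks: if $A\red A'$ and $\delta TA=\some B$, then $\delta TA'=\some{B'}$ for some $B'$ with $B\red B'$; this follows by a routine induction on $T$, applying Lemma~\ref{lem-beta-red} at each program. Instantiating the lifted statement with $T:=P::T_0$, $A:=(\lambda q)(\lambda r)::A_0$, and $A':=\subst q0{\lambda r}::A_0$, and feeding in $\delta(P::T_0)((\lambda q)(\lambda r)::A_0)=\some{[s]}$, produces $B'$ with $[s]\red B'$ and $\delta(P::T_0)(\subst q0{\lambda r}::A_0)=\some{B'}$. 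Here is the key observation that supplies the final $s'$: since $\nil$ admits no term-list reduction, $[s]\red B'$ can only have arisen from the first rule, so $B'=[s']$ with $s\red s'$. Finally, Lemma~\ref{lem-beta-subst} (with $Q\gg q$ and $R\gg r$) identifies $\delta T'A_0=\delta(\subst Q0R::P::T_0)A_0$ with $\delta(P::T_0)(\subst q0{\lambda r}::A_0)=\some{[s']}$. Together with $\delta V'=\some{A_0}$ this establishes $(T',V')\gg s'$, and we already have $s\red s'$.

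The main obstacle is the propagation step: the single-program Lemma~\ref{lem-beta-red} does not directly match the control-stack decompilation used in the refinement, so one is forced to interpose the auxiliary induction on $T$. The second subtlety — and the point the statement flags — is that one must observe that reducing the singleton list $[s]$ can only produce another singleton $[s']$, which is precisely where the witnessing reduct $s'$ of $s$ comes from.
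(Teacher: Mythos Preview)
Your proof is correct and follows essentially the same approach as the paper: unfold the $\beta$-rule, use Fact~\ref{lem-beta-app} to expose the redex $(\lambda q)(\lambda r)$, reduce the term list via Fact~\ref{fact-delta-V-abs} and Lemma~\ref{lem-beta-red}, then identify the result with $\delta T'A_0$ via Lemma~\ref{lem-beta-subst}. The only difference is presentational: the paper writes the propagation step as a single line ``$\delta(P::T)(\ldots)\red\delta(P::T)(\ldots)$ by Lemma~\ref{lem-beta-red}'' without spelling out the lift from programs to control stacks, whereas you make that induction on $T$ explicit; your justification for the existence of $s'$ (analysing $[s]\red B'$) is exactly the paper's remark that ``$\red$ preserves the length of a list''.
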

\begin{proof}
  Let $\app;P::T,\,R::Q::V~\bred~\subst Q0R::P::T,\,V$. 
  Moreover, let $R\gg t$, \ $Q\gg u$, and $\delta V=\some A$.
  We have:
  \begin{align*}
    \some{[s]}~=~
    &\delta(\app;P::T)(\delta(R::Q::V))
    \\~=~
    &\delta(\app;P::T)(\lambda t::\lambda u::A)
    \\~=~
    &\delta(P::T)((\lambda u)(\lambda t)::A)
    &&\text{Fact~\ref{lem-beta-app}}
    \\~\red~
    &\delta(P::T)(\subst u0{\lambda t}::A)
    &&\text{Lemma~\ref{lem-beta-red} and Fact~\ref{fact-delta-V-abs}}
   \\~=~
    &\delta(\subst Q0R::P::T)A
    &&\text{Lemma~\ref{lem-beta-subst}}
    \\~=~
    &\some{[s']}
    &&\text{for some $s'$}
  \end{align*}
  Note that $s'$ exists since $\red$ preserves
  the length of a list.
  We now have $s\red s'$ by the definition of $\red$
  and $(\subst Q0R::P::T,\,V)\gg s'$,
  which concludes the proof.
\qed\end{proof}

It remains to show that states are reducible
if they refine reducible terms.
For this purpose, we define \coqemph[stuckLs]{stuck term lists}:
\begin{mathpar}
  \inferrule*
  {\M{stuck}~s\\\forall t\in A.~t~\text{is an abstraction}}
  {\M{stuck}\,(s::A)}
  \and
  \inferrule*
  {\M{stuck}~A}
  {\M{stuck}\,(s::A)}
\end{mathpar}
Note that $s$ is stuck iff $[s]$ is stuck.

\begin{lemma}[][stuck_decompile]\label{lemma-stuck-program}
  Let $A$ be stuck and $\delta PA=\some B$.
  Then $B$ is stuck.  
\end{lemma}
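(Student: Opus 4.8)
The plan is to proceed by induction on the program $P$, leaving the stack $A$ (and hence $B$) general, so that the inductive hypothesis for a continuation program can be instantiated with a different stack. The cases for $\ret$, $\var\,n;P$ and $\lamb\,Q;P$ are routine, and all the real work sits in the application command.

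For $P=\ret$ we have $\delta\,\ret\,A=\some A$, so $B=A$ is stuck by assumption. For $P=\var\,n;P'$ the decompilation reduces to $\delta P'(n::A)$, and for $P=\lamb\,Q;P'$ it reduces to $\delta P'(\lambda s::A)$, where $\delta Q\nil=\some{[s]}$ holds because the overall decompilation succeeds. In both cases the extended stack has the stuck list $A$ as its tail, hence is itself stuck by the second stuck-list rule, and the inductive hypothesis for $P'$ yields that $B$ is stuck. Note that the hypothesis for $Q$ is not needed.

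The application case $P=\app;P'$ is the crux. Here $A$ must have the form $t::s::A'$, and $\delta(\app;P')A=\delta P'(st::A')$. To apply the inductive hypothesis for $P'$ it suffices to show that the new stack $st::A'$ is stuck, given that $t::s::A'$ is. I would obtain this by case analysis on the derivation making $t::s::A'$ stuck. If $t$ is stuck and every element of $s::A'$ is an abstraction, then in particular $s$ is an abstraction, so $st$ is stuck by the third stuck-term rule, and since the elements of $A'$ are abstractions, $st::A'$ is stuck by the first stuck-list rule. Otherwise $s::A'$ is stuck, and a further case split applies: if $s$ is stuck with $A'$ consisting of abstractions, then $st$ is stuck by the second stuck-term rule and $st::A'$ is stuck by the first stuck-list rule; if instead $A'$ is itself stuck, then $st::A'$ is stuck directly by the second stuck-list rule.

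The main obstacle is exactly this repackaging in the application case: decompilation fuses the two top stack entries $t$ and $s$ into a single application $st$, so the two-layer stuck structure of $t::s::A'$ has to be reassembled into a stuck structure of $st::A'$. The observation that carries every branch is that an application $st$ is stuck whenever $s$ is stuck, or $s$ is an abstraction and $t$ is stuck, which are precisely the two stuck-term rules for applications.
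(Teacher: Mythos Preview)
Your proof is correct and follows exactly the approach indicated in the paper, which records only ``By induction on $P$''. You have supplied the case analysis the paper omits, and in particular your treatment of the $\app$ case---the two-level inversion on $\M{stuck}(t::s::A')$ and the use of the two stuck-term rules for applications to obtain $\M{stuck}(st)$---is precisely the argument the induction requires.
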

\makeproof{lemma-stuck-program}{
  By induction on $P$.
}

\begin{lemma}[][stuck_decompileTask]\label{fact-stuck-task-stack}
  Let $A$ be stuck and $\delta TA=\some B$.
  Then $B$ is stuck.  
\end{lemma}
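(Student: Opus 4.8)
The plan is to prove the statement by induction on the control stack $T$, using the single-program version, Lemma~\ref{lemma-stuck-program}, as the engine that propagates stuckness through one decompilation step.

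In the base case $T=\nil$, the defining equation $\delta\nil A=\some A$ forces $B=A$, so the goal is immediate from the hypothesis that $A$ is stuck. For the inductive step I take $T=P::T'$ and unfold the defining equation $\delta(P::T')A=\delta T'A'$, which fires only when $\delta PA=\some{A'}$. Since we are given $\delta(P::T')A=\some B$, such an intermediate stack $A'$ must exist and moreover $\delta T'A'=\some B$. Because $A$ is stuck and $\delta PA=\some{A'}$, Lemma~\ref{lemma-stuck-program} yields that $A'$ is stuck. Applying the inductive hypothesis to $T'$ with the now-stuck list $A'$ and the equation $\delta T'A'=\some B$ gives that $B$ is stuck, closing the induction.

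I expect no real obstacle here: the only subtlety worth recording is that, from $\delta(P::T')A=\some B$ together with the convention that omitted equations default to $\none$, the per-program decompilation $\delta PA$ cannot be $\none$, so the intermediate $A'$ is well-defined and is itself stuck by the preceding lemma. All the genuine content — that executing a single program on a stuck stack leaves a stuck stack — has already been discharged in Lemma~\ref{lemma-stuck-program}; the present lemma merely iterates it along the control stack.
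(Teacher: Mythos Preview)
Your proof is correct and follows exactly the approach the paper takes: induction on $T$ using Lemma~\ref{lemma-stuck-program} to propagate stuckness through each program on the control stack.
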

\makeproof{fact-stuck-task-stack}{
  By induction on $T$ using 
  Lemma~\ref{lemma-stuck-program}.
\qed}

\begin{fact}[Trichotomy][stateS_trichotomy]\label{fact-stack-tricho}
  Let $T,V \gg s$. Then exactly one of the following holds:
  \begin{enumerate}
  \item $(T,V)$ is reducible.
  \item $(T,V)=(\nil,[P])$ and $P \gg s'$ with $s=\lambda s'$ for some $P,s'$.
  \item $T=\var\,x;P::T'$ for some $x,P,T'$ and $s$ is stuck.
  \end{enumerate}
\end{fact}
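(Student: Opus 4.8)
The plan is to unfold the refinement and do a case analysis on the control stack $T$. From $(T,V)\gg s$ we obtain a term list $A$ with $\delta V=\some A$ and $\delta T A=\some{[s]}$. The three alternatives are distinguished by the shape of $T$: empty, headed by $\ret$, $\lamb$, or $\app$, or headed by $\var$. These shapes are exhaustive, and they also yield mutual exclusivity, because a state is reducible precisely when $T$ is nonempty and its head program begins with $\ret$, $\lamb$, or $\app$: an empty $T$ (alternative~2) and a $T$ headed by $\var$ (alternative~3) match no reduction rule and are therefore irreducible, while alternatives~2 and~3 are told apart by whether $T$ is empty.

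If $T=\nil$, then $\delta T A=\some A=\some{[s]}$ gives $A=[s]$, so $\delta V=\some{[s]}$. Reading the equations for the argument-stack decompilation, a one-element output is only possible for $V=[P]$ with $P\gg s'$ and $s=\lambda s'$, which is alternative~2. If $T=\ret::T'$ the return rule applies, and if $T=(\lamb\,Q;P')::T'$ the lambda rule applies, so $(T,V)$ is reducible (alternative~1) in both cases.

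The delicate reducible case is $T=(\app;P')::T'$, where I must show the application rule is enabled, i.e. that $V$ holds at least two programs. Since $\delta T A=\some{[s]}$ is defined, the nested value $\delta(\app;P')A$ is defined, and by its defining equation this forces $A=t::s_0::A''$; hence $A$ has length at least two. As $\delta V=\some A$ and each cons step of the argument-stack decompilation contributes exactly one term, $V$ has length at least two as well, say $V=R::Q::V''$. Then the application rule fires and $(T,V)$ is reducible (alternative~1).

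Finally, if $T=(\var\,x;P')::T'$ no rule matches a variable command, so $(T,V)$ is irreducible and I must show $s$ is stuck. The defining equations yield $\delta T A=\delta(P'::T')(x::A)$. By Fact~\ref{fact-delta-V-abs} every term of $A$ is an abstraction, and the variable $x$ is stuck, so $x::A$ is stuck by the first rule for stuck term lists. Applying Lemma~\ref{fact-stuck-task-stack} with $\delta(P'::T')(x::A)=\some{[s]}$ shows that $[s]$, and hence $s$, is stuck (alternative~3). I expect the $\app$ case to be the main obstacle: it requires carefully propagating definedness through the nested decompilation calls to guarantee that the argument stack really contains two programs, so that the $\beta$-rule can fire. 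The variable case is lighter, being a combination of Fact~\ref{fact-delta-V-abs} with the two stuckness lemmas.
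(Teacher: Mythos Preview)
Your proposal is correct and follows essentially the same route as the paper: unfold the refinement to obtain $A$, then perform a case analysis on the shape of $T$, using Fact~\ref{fact-delta-V-abs} together with Lemma~\ref{fact-stuck-task-stack} for the $\var$ case and the definedness of $\delta(\app;P')A$ to force $|V|\ge 2$ in the $\app$ case. Your treatment is in fact slightly tidier than the paper's in two respects: you explicitly argue mutual exclusivity, and in the $\var$ case you apply Lemma~\ref{fact-stuck-task-stack} directly to the stack $P'::T'$ rather than invoking Lemma~\ref{lemma-stuck-program} separately.
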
  
\begin{proof}
  Let $\delta V=\some A$ and
  $\delta TA=\some{[s]}$, and $s$ be reducible.
  By Fact~\ref{fact-delta-V-abs} we know
  that $A$ contains only abstractions.
  Case analysis on $T$.

  $T=\nil$.  Then $A=[s]$ and the second case holds by definition of $\delta$.

  $T=\ret::T'$.  Then $(T,V)$ is reducible.

  $T=\M{var}\,n;P::T'$.
  We have
  $$
  \some{[s]}=
  \delta(\M{var}\,n;P::T')A=
  \delta T'(\delta(\M{var}\,n;P)A)=
  \delta T'(\delta P(n::A))
  $$
  Since $n::A$ is stuck,
  we know by Lemmas~\ref{lemma-stuck-program}
  and~\ref{fact-stuck-task-stack}
  that $[s]$ is stuck. Thus the third case holds.

  $T=\lamb\,Q;P::T'$.  Then $(T,V)$ is reducible.

  ${T=\app;P::T'}$.
  Then $\some{[s]}=\delta(\app;P::T')A=\delta T'(\delta(\app;P)A)$
  and hence $A=t::s::A'$.  Thus $V=R::Q::V'$.
  Thus $(T,V)$ is reducible.
\qed\end{proof}

\begin{corollary}[Progress][reducible_red]
  \label{fact-stack-reducibility}
  If $T,V\gg s$ and $s$ is reducible,
  then $(T,V)$ is reducible.
\end{corollary}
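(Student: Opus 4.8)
The plan is to derive this immediately from the state trichotomy of Fact~\ref{fact-stack-tricho} combined with the term trichotomy of Fact~\ref{fact-term-tricho}. Since all the structural case analysis has already been carried out in proving the trichotomy, the corollary is essentially a matter of ruling out the two non-reducible alternatives using the hypothesis that $s$ is reducible.

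First I would invoke Fact~\ref{fact-stack-tricho} on the hypothesis $T,V\gg s$, which tells us that exactly one of its three cases holds. The goal is to establish that we are in case~(1), namely that $(T,V)$ is reducible. To do so, I would argue that cases~(2) and~(3) are incompatible with the assumption that $s$ is reducible.

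For case~(2), the trichotomy gives $s=\lambda s'$, so $s$ is an abstraction. For case~(3), it gives that $s$ is stuck. In either situation I would appeal to Fact~\ref{fact-term-tricho}: for any term, being reducible, being an abstraction, and being stuck are mutually exclusive (exactly one holds). Since $s$ is assumed reducible, it can be neither an abstraction nor stuck, which contradicts cases~(2) and~(3) respectively. Hence only case~(1) remains, and $(T,V)$ is reducible.

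I do not expect any genuine obstacle here, as the difficult work lives in Fact~\ref{fact-stack-tricho}; the only point requiring care is making the mutual exclusivity explicit by citing the term trichotomy rather than re-deriving that an abstraction or stuck term is irreducible.
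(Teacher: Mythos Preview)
Your proposal is correct and matches the paper's own proof exactly: the paper simply states that the corollary follows from Fact~\ref{fact-stack-tricho} using Fact~\ref{fact-term-tricho}, which is precisely the argument you spell out.
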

\begin{proof}
  Follows from Fact~\ref{fact-stack-tricho} using Fact~\ref{fact-term-tricho}.
\qed\end{proof}

\begin{theorem}[Naive Stack Machine to L][stack_L_refinement]
  \label{theo-stack-L}
  The relation
  \begin{align*}
    (T,V)\gg s
    &~:=~\exists A.~\delta V=\some A~\land~\delta TA=\some{[s]}
  \end{align*}
  is a functional and computable refinement.
  Moreover, \coqlink[compile_stack_L]{$([\gamma\,s\,\ret],\nil)\gg s$}
  holds for every term $s$.
\end{theorem}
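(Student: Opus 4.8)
The plan is to recognise that nearly all the work has already been done, so the proof of the refinement properties is an assembly that matches each clause of Definition~\ref{definition-ref} to an established result. First I would discharge functionality and computability of $\gg$ directly by Fact~\ref{fact-stack-ref-fun-comp}, which leaves only the four refinement conditions and the claim about the initial state.

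I would then verify the four conditions in order. Condition~(1), that reducibility propagates downwards, is exactly Corollary~\ref{fact-stack-reducibility}. Condition~(2), that $\tau$-steps preserve the refined term, is the $\tau$-Simulation of Fact~\ref{fact-stack-tau-sim}. Condition~(3), that a $\beta$-step of the machine is matched by a reduction step in L, is the $\beta$-Simulation of Fact~\ref{fact-beta-corr}. For condition~(4) I would appeal to the stronger, unconditional statement of Fact~\ref{fact-stack-beta-tau-fun-term}: since $\tred$ is terminating on \emph{every} state, it terminates in particular on any state refining a term, so the hypothesis $a \gg x$ is not even needed. The one thing I would double-check here is that I invoke this unconditional termination of $\tred$, rather than a hypothesis-dependent variant.

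For the initial state I would compute $\delta$ directly, taking $A = \nil$. The value stack gives $\delta\,\nil = \some\nil$, and for the control stack $[\gamma\,s\,\ret]$ I unfold the control-stack decompilation and apply Fact~\ref{fact-decompile} with program $\ret$ and empty argument stack: $\delta(\gamma\,s\,\ret)\,\nil = \delta\,\ret\,(s::\nil) = \some{[s]}$, so that $\delta([\gamma\,s\,\ret])\,\nil = \some{[s]}$, witnessing $([\gamma\,s\,\ret],\nil) \gg s$. The main obstacle is not located in this theorem at all: the genuinely hard step was establishing the $\beta$-Simulation (Fact~\ref{fact-beta-corr}), which needed the decompilation machinery, the auxiliary list-reduction relation, and the program substitution lemma. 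Once those are available, this theorem is merely the packaging of the three simulation results, the progress corollary, and termination into the exact shape demanded by Definition~\ref{definition-ref}.
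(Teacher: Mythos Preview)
Your proposal is correct and mirrors the paper's own proof: the paper likewise cites Facts~\ref{fact-stack-ref-fun-comp}, \ref{fact-stack-reducibility}, \ref{fact-stack-tau-sim}, \ref{fact-beta-corr}, and~\ref{fact-stack-beta-tau-fun-term} for the first claim and Fact~\ref{fact-decompile} for the second. Your slightly more explicit mapping of each fact to the corresponding clause of Definition~\ref{definition-ref}, and the unfolded computation for the initial state, are exactly the details the paper's terse citation list elides.
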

\begin{proof}
  The first claim follows with 
  Facts~\ref{fact-stack-ref-fun-comp},
  \ref{fact-stack-reducibility},
  \ref{fact-stack-tau-sim},
  \ref{fact-beta-corr},
  and~\ref{fact-stack-beta-tau-fun-term}.
  The second claim follows with Fact~\ref{fact-decompile}.
\qed\end{proof}

\section{Closures}
\setCoqFilename{Closures}

A closure is a pair consisting of 
a program and an environment.
An environment is a list of closures
representing a delayed substitution.
With closures we can refine 
the naive stack machine so that no
substitution operation is needed.
\begin{align*}
  \N{e}~
  &:~\coqlink[Clo]{\N{\Clo}}~::=~P/E
  &&\text{\coqemph[Clo]{closure}}
  \\
  \N{E},\N{F},\N{T},\N{V}~
  &:~\List{(\Clo)}
  &&\text{\emph{environment}}
\end{align*}

For the decompilation of closures 
into plain programs we define a 
\coqlink[substPl]
{\emph{parallel substitution operation $\subst PkW$}}
for programs
($W$ ranges over lists of programs):
\begin{align*}
  \subst{\ret\,}kW
  &~:=~\ret \\
  \subst{(\app;P)}kW
  &~:=~\app;\subst PkW
  \\
  \subst{(\lamb\,Q;P)}kW
  &~:=~\lamb(\subst Q{\natS k}W);\subst PkW
  \\
  \subst{(\var\,n;P)}kW
  &~:=\text{if}~~n\geq k \land W[n-k]=\some Q \text{
    then}~\lamb\,Q;\subst PkW \text{ else}~\var\,n;\subst PkW
\end{align*}

We will use the notation
$\N{W\!\bnd 1}:=\forall P\in W.~P\bnd 1$.

\begin{fact}[Parallel Substitution][substPl_nil]\label{fact-par-subst}
  \begin{enumerate}
  \coqitem[substPl_nil] $\subst Pk\nil=P$.
  \coqitem[boundP_mono] If $P\bnd k$ and $k\le k'$, then $P\bnd k'$.
  \coqitem[boundP_substP] If $P\bnd k$, then $\subst PkQ=P$.
  \coqitem[substPl_cons] If $W\!\bnd 1$, then 
    $\subst Pk{Q::W}=\subst{(\subst P{\natS k}W)}kQ$.
  \coqitem[substP_boundP] If $W\!\bnd 1$ and $P\bnd |W|+k$, 
    then $\subst PkW\bnd k$.
  \end{enumerate}
\end{fact}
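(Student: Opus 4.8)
The plan is to prove the five parts essentially independently, each by a routine induction, and then to isolate the one genuinely delicate case. Parts (1)--(3) need no machinery beyond the definitions. For (1) I would induct on $P$; the only non-obvious case is $\var\,n;P$, where the guard $(n\ge k)\land(\nil[n-k]=\some Q)$ always fails because $\nil[m]=\none$, so parallel substitution with the empty list takes the else-branch and reduces to $\var\,n;P$ by the inductive hypothesis. Part (2) follows by induction on the derivation of $P\bnd k$, weakening the index in each rule (using $n<k\le k'$ in the variable rule and $\natS k\le\natS k'$ in the lambda rule). Part (3) likewise follows by induction on $P\bnd k$: in the variable case the hypothesis $n<k$ forces $n\ne k$, so single substitution leaves the command untouched.

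Part (5) I would prove by induction on $P$, generalising over $k$ since the index is incremented under $\lamb$. The interesting case is $\var\,n;P$ with $n<|W|+k$. If parallel substitution fires its then-branch, a program $Q$ with $W[n-k]=\some Q$ is inserted as a lambda body; since $Q\in W$ and $W\!\bnd 1$, monotonicity (2) gives $Q\bnd\natS k$, exactly what the lambda rule for the bound predicate demands. If it takes the else-branch, I would argue $n<k$: were $n\ge k$, then $n-k<|W|$ would follow from $n<|W|+k$, hence $W[n-k]$ could not be $\none$ and the then-branch would have fired. With $n<k$ established, the variable rule applies directly to the $\subst{P}{k}{W}\bnd k$ supplied by the inductive hypothesis.

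The heart of the fact is part (4), decomposing a parallel substitution with head $Q$ into an inner parallel substitution by $W$ at index $\natS k$ followed by a single substitution by $Q$ at index $k$. I would prove it by induction on $P$, with the variable command $\var\,n;P$ carrying all the weight via a case split on $n$ versus $k$. For $n<k$ both outer and inner guards fail and the equation reduces to the inductive hypothesis; for $n=k$ the left side looks up $(Q::W)[0]=\some Q$ and inserts $Q$, while on the right the inner substitution at $\natS k$ skips the variable and the outer single substitution at $k$ inserts $Q$, matching. The decisive case is $n>k$: here $(Q::W)[n-k]=W[n-\natS k]$, so whenever the inner lookup yields some program $R$, the two sides agree only after I show that the outer single substitution $\subst{R}{\natS k}{Q}$ leaves $R$ fixed. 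This is where $W\!\bnd 1$ is essential: $R\in W$ gives $R\bnd 1$, monotonicity (2) lifts this to $R\bnd\natS k$, and the fixed-point lemma (3) then yields $\subst{R}{\natS k}{Q}=R$. The remaining structural cases ($\ret$, $\app;P$, $\lamb\,Q;P$) follow directly from the inductive hypotheses, the lambda case simply applying the hypothesis at the incremented index on both the nested and the continuation program. I expect this $n>k$ subcase to be the only real obstacle, precisely because it is the point where the three preceding parts must be combined.
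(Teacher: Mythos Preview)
Your proposal is correct and matches the paper's approach: parts (1), (4), and (5) by induction on $P$, parts (2) and (3) by induction on $P\bnd k$, with (4) invoking (2) and (3) in the variable case exactly as the paper indicates. Your detailed treatment of the $n>k$ subcase in (4)---lifting $R\bnd 1$ to $R\bnd\natS k$ via (2) and then applying (3)---is precisely the argument the paper's terse hint ``using (2) and (3) in the variable case'' is pointing at.
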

\makeproof{fact-par-subst}{
  (1) follows by induction on $P$. 
  (2) and (3) follow by induction on $P\bnd k$.
  (4) follows by induction on $P$ 
  using (2) and (3) in the variable case.
  (5) follows by induction on $P$.
}

Note that Fact~\ref{fact-par-subst}\,(4)
relates parallel substitution to single substitution
.

We define a function 
$\coqlink[deltaC]{\N{\delta_1 e}}$ of type $\Clo\to\Pro$
translating closures into programs:
\begin{align*}
  \coqlink[deltaC]{\N{\delta_1 (P/E)}}
  &~:=~\subst P 1 {\delta_1@E}
\end{align*}

We also define an inductive
\coqemph[boundC]{bound predicate $e\bnd1$} 
for closures:
\begin{mathpar}
  \inferrule
  {P\bnd\natS |E|\\E\bnd1}
  {P/E\bnd1} \hspace{2em}
  {\N{E\bnd 1}~:=~\forall e\in E.~e\bnd 1}
\end{mathpar}
Note the recursion through environments
via the map function and via the membership predicate in the last two definitions.

\begin{fact}[][translateC_boundP]
  \label{fact-closure-subst}
  If $e\bnd1$, then $\delta_1 e\bnd1$. 
\end{fact}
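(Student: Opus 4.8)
The plan is to prove this by induction on the closure $e$. A closure $P/E$ nests through its environment $E:\List(\Clo)$, which is itself a list of closures, so the naive structural induction principle for $\Clo$ is too weak---it provides no hypothesis about the closures stored inside $E$. I would therefore use a nested induction (equivalently, induction on a size measure for closures) so that the induction hypothesis is available at every $e'\in E$.

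For the single case $e=P/E$, the assumption $e\bnd1$ unfolds to the two premises $P\bnd\natS|E|$ and $E\bnd1$, where $E\bnd1$ means $e'\bnd1$ for every $e'\in E$. Applying the induction hypothesis at each such $e'$ yields $\delta_1 e'\bnd1$, and since every program of $\delta_1@E$ is of this form, we obtain $\delta_1@E\bnd1$. This is the first premise of Fact~\ref{fact-par-subst}(5) for $W:=\delta_1@E$ and $k:=1$.

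For the second premise I would use that map preserves length, so $|\delta_1@E|=|E|$ and hence $|W|+1=\natS|E|$; the hypothesis $P\bnd\natS|E|$ is then exactly $P\bnd|W|+1$. Fact~\ref{fact-par-subst}(5) now gives $\subst P1{\delta_1@E}\bnd1$, which by the defining equation of $\delta_1$ is precisely $\delta_1 e\bnd1$, closing the case.

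The main obstacle is the induction setup, not the arithmetic. Because both $\delta_1$ and the closure bound predicate recurse through environments via map and membership, the crux is to arrange the induction so that the implication $e'\bnd1\Rightarrow\delta_1 e'\bnd1$ is genuinely available for every $e'\in E$. Once this nested induction principle is in place, the case is a direct application of Fact~\ref{fact-par-subst}(5).
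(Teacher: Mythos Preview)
Your proof is correct and the core step---reducing to Fact~\ref{fact-par-subst}\,(5) with $W:=\delta_1@E$ and $k:=1$---is exactly what the paper does. The one difference is the induction setup: the paper proceeds by induction on the derivation of the inductive predicate $e\bnd1$ rather than by structural (or size) induction on~$e$. Because the rule deriving $P/E\bnd1$ has $e'\bnd1$ for every $e'\in E$ as premises, induction on the predicate automatically supplies the hypothesis $\delta_1 e'\bnd1$ at each $e'\in E$, so the ``nested induction'' complication you flag simply does not arise. Your approach works too, but inducting on the predicate is the cleaner route here and is the one the paper takes.
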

\makeproof{fact-closure-subst}{
\begin{proof}
  Follows by induction on $e\bnd1$ 
  using Fact~\ref{fact-par-subst}\,(5).
 \qed\end{proof}}

\section{Closure Machine}
\label{sec:closure-machine}
\setCoqFilename{M_clos}

We now refine the naive stack machine by replacing 
all programs on the control stack and the argument stack
with closures, eliminating program substitution. 

\begin{figure}[t]\footnotesize
  \begin{align*}
    (\ret/E)::T,~V
    &~\tred~T,~V
    \\
    (\var\,n;P/E)::T,~V
    &~\tred~(P/E)::T,~e::V 
    &&\text{if}~E[n]=\some e
    \\
    (\lamb\,Q;P/E)::T,~V
    &~\tred~(P/E)::T,~(Q/E)::V
    \\
    (\app;P/E)::T,~e::(Q/F)::V
    &~\bred~(Q/e::F)::(P/E)::T,~V
  \end{align*}%
  \vspace{-7mm}
  \caption{Reduction rules of the closure machine}%
  \label{fig:closure-red}
\end{figure}

\coqemph[stateC]{States of the closure machine} are pairs 
\begin{align*}
  \N{(T,V)}&~:~\List(\Clo)\times\List(\Clo)
\end{align*}
consisting of a \emph{control stack~$T$} 
and an \emph{argument stack~$V$}.

The \coqemph[stepC]{reduction rules of the closure machine}
appear in Figure~\ref{fig:closure-red}.
The \emph{variable rule} (second $\tau$-rule) is new.
It applies if the environment provides a closure for the variable.
In this case the closure
is pushed on the argument stack.
We see this as delayed substitution of the variable.
The variable rule will be simulated
with the lambda rule of the naive stack machine.

The \emph{application rule} ($\beta$-rule)
takes two closures $e$ and $Q/F$ from the argument stack
and pushes the closure $Q/e::F$ on the control stack,
which represents the result of $\beta$-reducing
the abstraction represented by $Q/F$ 
with the argument $e$.

We will show that the closure machine 
implements the naive stack machine correctly
provided there are no free variables.

There is the complication that the closures on
the control stack must be closed while the
closures on the argument stack are allowed
to have the free variable $0$ representing 
the argument to be supplied by the application rule.

We define \coqemph[closedSC]{closed states} 
of the closure machine as follows:
\begin{align*}
  \N{P/E\bnd0}
  &~:=~P\bnd|E|~\land~E\bnd1
  \\
  \N{T\bnd 0}
  &~:=~\forall e\in T.~e\bnd 0
  \\
  \coqlink[closedSC]{\N{\M{closed}\,(T,V)}}
  &~:=~T\bnd0~\land~V\bnd1
\end{align*}
We define a function $\N{\delta_0 e}$ of type $\Clo\to\Pro$
for decompiling closures on the task stack:
\setCoqFilename{Closures}%
\begin{align*}
  \coqlink[deltaC]{\N{\delta_0 (P/E)}}
  &~:=~\subst P 0 {\delta_1@E}
\end{align*}
\setCoqFilename{M_clos}%

We can now define the \emph{refinement relation}
between states of the closure machine 
and states of the naive stack machine:
\begin{align*}
  \coqlink[repsCS]{\N{(T,V)\gg\sigma}}
  &~:=~\M{closed}\,(T,V)~\land~
    (\delta_0@T,\delta_1@V)=\sigma
\end{align*}
We show that $(T,V)\gg\sigma$ is a refinement. 

\begin{fact}[][repsCS_functional]\label{fact-closure-ref-fun-comp}
  $(T,V)\gg\sigma$ is functional and computable.
\end{fact}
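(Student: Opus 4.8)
The plan is to treat functionality and computability separately. Functionality is immediate, while computability reduces entirely to deciding the closedness side condition.

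For functionality, note that the second conjunct of the definition, $(\delta_0@T,\delta_1@V)=\sigma$, pins down $\sigma$ uniquely as a function of $(T,V)$. Hence if $(T,V)\gg\sigma$ and $(T,V)\gg\sigma'$ both hold, then $\sigma$ and $\sigma'$ both equal $(\delta_0@T,\delta_1@V)$ and so coincide. No induction is required.

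For computability I would exhibit the witnessing function
\[
  f(T,V) \;:=\; \text{if } \M{closed}\,(T,V) \text{ then } \some{(\delta_0@T,\delta_1@V)} \text{ else } \none.
\]
Its correctness against $\gg$ is read off directly from the definition: if $\M{closed}\,(T,V)$ holds, then both conjuncts of $(T,V)\gg(\delta_0@T,\delta_1@V)$ are satisfied and $f$ returns exactly that state; if $\M{closed}\,(T,V)$ fails, then no $\sigma$ can be refined, since the refinement demands $\M{closed}\,(T,V)$, so $\none$ is the correct answer. The decompilation maps $\delta_0$ and $\delta_1$ are total recursive functions, so the only remaining task is the conditional test.

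The main obstacle is therefore showing that $\M{closed}\,(T,V)=T\bnd0\land V\bnd1$ is decidable. This reduces to deciding the two closure-bound predicates $e\bnd0$ and $e\bnd1$ on the individual closures of $T$ and $V$. For a closure $P/E$ these unfold to $P\bnd|E|\land E\bnd1$ and $P\bnd\natS|E|\land E\bnd1$, so each reduces to deciding the program-bound predicate $P\bnd k$ together with the environment condition $E\bnd1=\forall e'\in E.~e'\bnd1$. The first is routine structural recursion on $P$ using decidability of $n<k$. The second is the delicate part: it recurses through the environment via map and membership, invoking $e'\bnd1$ on closures nested inside $P/E$. I would set up the decision procedure by well-founded recursion on the closure structure — equivalently the nested induction over $\Clo$ and $\List(\Clo)$ that the paper flags when introducing the bound predicate — and this is where the Coq formalisation effort concentrates.
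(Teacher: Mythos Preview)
Your proposal is correct. The paper gives no proof text for this fact at all---it is stated without justification and left to the Coq development---so your write-up is already more detailed than the paper's treatment. Your identification of functionality as immediate from the equational conjunct, and of computability as reducing to decidability of the nested closure-bound predicate, matches what the formalisation does; the remark about needing the mutual/nested induction over $\Clo$ and $\List(\Clo)$ is exactly the point the paper alludes to when it introduces $e\bnd1$ and notes ``the recursion through environments via the map function and via the membership predicate.''
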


\begin{fact}[Progress][reducibility]
  \label{fact-closure-red}
  Let 
  $(\delta_0@T,\delta_1@V)$ be reducible.
  Then $(T,V)$ is reducible.
\end{fact}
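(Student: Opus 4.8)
The plan is to analyse the shape of the closure state $(T,V)$ and exhibit a single closure reduction step in each case, transferring reducibility from the naive state through the decompilation $\delta_0$. Since every rule of the naive stack machine pattern-matches on a non-empty control stack, reducibility of $(\delta_0@T,\delta_1@V)$ forces $\delta_0@T\neq\nil$, hence $T=(P/E)::T'$ for some program $P$, environment $E$, and remainder $T'$. The head program of $\delta_0@T$ is then $\delta_0(P/E)=\subst P0{\delta_1@E}$, so I would continue by case analysis on $P$.

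Three of the four cases are immediate. For $P=\ret$ we have $\delta_0(P/E)=\ret$, and the closure return rule fires on $(\ret/E)::T'$. For $P=\lamb\,Q;P'$ the parallel substitution produces a program starting with a lambda command, and the closure lambda rule applies to $(\lamb\,Q;P'/E)::T'$ unconditionally. For $P=\app;P'$ the substitution yields a program starting with an application command; the only naive rule that can fire on such a control stack is the application rule, so reducibility of the naive state forces $\delta_1@V$, and hence $V$, to have at least two entries, say $V=e::(Q/F)::V'$. The closure application rule then fires.

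The delicate case is $P=\var\,n;P'$, the one point where the machines genuinely differ: the closure machine has a variable rule, the naive machine has none. Here the parallel substitution inspects the decompiled environment, rewriting the variable command to a lambda command exactly when $(\delta_1@E)[n]=\some Q$. The key ingredient is the elementary observation that lookup commutes with map, so $(\delta_1@E)[n]=\some{\delta_1 e}$ precisely when $E[n]=\some e$. If $E[n]=\some e$, I would apply the closure variable rule, pushing $e$ on the argument stack. If $E[n]=\none$, then $(\delta_1@E)[n]=\none$, so $\delta_0(P/E)$ itself begins with a variable command; no naive rule fires on such a control stack, contradicting the assumed reducibility of the naive state, so this subcase is vacuous.

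I expect this variable case to be the main obstacle, as it is the only place where the extra rule of the closure machine has to be justified and where the substitution-based decompilation does nontrivial work. It is worth noting that no closedness hypothesis on $(T,V)$ is required: the problematic $E[n]=\none$ branch is excluded directly by reducibility of the naive state rather than by any invariant.
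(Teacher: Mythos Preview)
Your proposal is correct and matches the paper's approach: the paper's proof is simply ``Straightforward case analysis,'' and what you have written is precisely the expanded version of that case analysis, with the four cases on the head program $P$ and the environment-lookup subcase for $\var$ handled exactly as needed. Your observation that closedness is not required is also accurate; the fact is stated without that hypothesis.
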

\makeproof{fact-closure-red}{
  Straightforward case analysis.
  \qed}

\begin{fact}[][closedSC_preserved]\label{fact-closure-closed}
  Let $(T,V)$ be closed and $(T,V)\red(T',V')$.
  Then $(T',V')$ is closed.
\end{fact}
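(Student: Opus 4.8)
The plan is to proceed by case analysis on the reduction step $(T,V)\red(T',V')$, treating each of the four rules of Figure~\ref{fig:closure-red} separately. In every case I unfold the definition $\M{closed}(T,V)=T\bnd 0\land V\bnd 1$ and check that the resulting state again satisfies both conjuncts. Since $T\bnd 0$ and $V\bnd 1$ are pointwise conditions over lists of closures, the tail of a stack inherits the bound of the whole stack for free, so the only real work lies in establishing the bounds of the closures that are newly pushed. Throughout I invert the bound predicates for programs and closures to extract the component bounds.

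For the return rule $(\ret/E)::T,V\tred T,V$ both stacks only shrink, so closedness is immediate. For the lambda rule $(\lamb\,Q;P/E)::T,V\tred(P/E)::T,(Q/E)::V$ I unfold $\lamb\,Q;P/E\bnd 0$, i.e.\ $\lamb\,Q;P\bnd|E|$ and $E\bnd 1$; inverting the program bound predicate yields $Q\bnd\natS|E|$ and $P\bnd|E|$. The first, together with $E\bnd 1$, gives $Q/E\bnd 1$ by the closure bound rule, which is exactly what the new argument-stack entry needs, while the second gives $P/E\bnd 0$ for the new control-stack entry. The variable rule $(\var\,n;P/E)::T,V\tred(P/E)::T,e::V$ is analogous: inverting $\var\,n;P\bnd|E|$ gives $P\bnd|E|$ (so $P/E\bnd 0$), and for the pushed closure $e$ I use that $E[n]=\some e$ entails $e\in E$, whence $e\bnd 1$ follows from $E\bnd 1$.

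The one case with genuine content is the application rule $(\app;P/E)::T,e::(Q/F)::V\bred(Q/(e::F))::(P/E)::T,V$. Here the argument stack supplies $e\bnd 1$, $Q/F\bnd 1$, and $V\bnd 1$, and from $\app;P/E\bnd 0$ I again obtain $P/E\bnd 0$. The interesting closure is $Q/(e::F)$: inverting $Q/F\bnd 1$ gives $Q\bnd\natS|F|$ and $F\bnd 1$, and since $|e::F|=\natS|F|$ this is exactly $Q\bnd|e::F|$; moreover $e::F\bnd 1$ holds because $e\bnd 1$ and $F\bnd 1$. Hence $Q/(e::F)\bnd 0$, i.e.\ the closure that was bound $1$ on the argument stack becomes closed once its free variable is filled by prepending the argument $e$ to the environment. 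I expect this bookkeeping — matching $|e::F|=\natS|F|$ against the $\natS|F|$ delivered by the $\bnd 1$ predicate — to be the only subtle point; everything else is routine inversion of the bound predicates.
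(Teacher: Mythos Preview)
Your proof is correct and matches the paper's approach exactly: the paper simply says ``Straightforward case analysis,'' and you have spelled out that case analysis in full detail, correctly inverting the bound predicates and handling the $|e::F|=\natS|F|$ bookkeeping in the application case.
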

\makeproof{fact-closure-closed}{
  Straightforward case analysis.
}

\begin{fact}[$\tau$-Simulation][tau_simulation]
  \label{fact-closure-tau}
  Let $(T,V)\tred(T',V')$.\\
  Then $(\delta_0@T,\delta_1@V) \tred(\delta_0@T',\delta_1@V')$.
\end{fact}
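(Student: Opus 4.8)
The plan is to proceed by case analysis on the three $\tau$-rules of the closure machine from Figure~\ref{fig:closure-red} (return, variable, and lambda), and in each case to compute the decompilations of both sides by unfolding the definitions of $\delta_0$, $\delta_1$, and the parallel substitution $\subst{P}{k}{W}$. Since map distributes over cons, we have $\delta_0@(e::T) = (\delta_0 e)::(\delta_0@T)$ and $\delta_1@(e::V) = (\delta_1 e)::(\delta_1@V)$, so it suffices to determine the head command produced by decompiling the closure that each rule consumes and to match it against the corresponding $\tau$-rule of the naive stack machine in Figure~\ref{fig:stack-red}. I expect the return and lambda cases to be direct unfoldings, with the variable case carrying the conceptual weight.

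For the return rule $(\ret/E)::T,~V \tred T,~V$ I would use $\delta_0(\ret/E) = \subst{\ret}{0}{\delta_1@E} = \ret$, so the decompiled control stack has head $\ret$ and the naive machine's return rule fires, leaving $\delta_1@V$ untouched. For the lambda rule $(\lamb\,Q;P/E)::T,~V \tred (P/E)::T,~(Q/E)::V$ I would unfold $\delta_0((\lamb\,Q;P)/E) = \lamb(\subst{Q}{1}{\delta_1@E});\subst{P}{0}{\delta_1@E} = \lamb(\delta_1(Q/E));\delta_0(P/E)$; the naive machine's lambda rule then pushes $\delta_1(Q/E)$ onto the argument stack and $\delta_0(P/E)$ back onto the control stack, which are exactly $\delta_1@((Q/E)::V)$ and $\delta_0@((P/E)::T)$.

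The interesting case—and the one I expect to be the main obstacle, though only conceptually rather than computationally—is the variable rule $(\var\,n;P/E)::T,~V \tred (P/E)::T,~e::V$ with side condition $E[n] = \some e$. Here I would invoke the routine map–lookup commutation to conclude $(\delta_1@E)[n] = \some{\delta_1 e}$ from $E[n] = \some e$. With $k=0$ the guard $n \ge k \land (\delta_1@E)[n-k] = \some{\delta_1 e}$ of the variable clause of parallel substitution is satisfied, so $\delta_0((\var\,n;P)/E) = \lamb\,(\delta_1 e);\delta_0(P/E)$: the variable command is rewritten into a \emph{lambda} command whose body is $\delta_1 e$. Consequently the decompiled control stack again has a lambda head, the naive machine's lambda rule fires and pushes $\delta_1 e$ onto the argument stack, and since $\delta_1@(e::V) = (\delta_1 e)::(\delta_1@V)$ this matches the claimed right-hand side. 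Thus the closure machine's variable step is simulated by the naive machine's lambda step, as anticipated in the surrounding text. The only real subtlety is tracking that the inserted body is $\delta_1 e$ and not $\delta_0 e$, reflecting that argument-stack closures decompile to abstraction bodies; the computation is otherwise a straightforward unfolding in each case.
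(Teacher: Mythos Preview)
Your proposal is correct and is precisely the case analysis the paper has in mind; the paper's own proof is simply ``Straightforward case analysis.'' Your handling of each case, including the key observation that the variable rule decompiles to a lambda command via the parallel-substitution clause and is thus simulated by the naive machine's lambda rule, is exactly right.
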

\makeproof{fact-closure-tau}{
  Straightforward case analysis.
}

\begin{fact}[$\beta$-Simulation][beta_simulation]
  \label{fact-closure-beta}
  Let $(T,V)$ be closed and $(T,V)\bred(T',V')$.\\
  Then $(\delta_0@T,\delta_1@V) \bred(\delta_0@T',\delta_1@V')$.
\end{fact}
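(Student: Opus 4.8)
The plan is to begin with a case analysis on the $\beta$-step. The closure machine has only one $\bred$-rule, the application rule, so I may assume $T=(\app;P/E)::T_0$ and $V=e::(Q/F)::V_0$, with $T'=(Q/e::F)::(P/E)::T_0$ and $V'=V_0$. The goal then splits into producing the matching application step of the naive stack machine and checking that its target state equals $(\delta_0@T',\delta_1@V')$.

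First I would unfold the decompilation functions on the source state. By the definition of $\delta_0$ and the substitution equation for $\app$, the head of the decompiled control stack is $\delta_0(\app;P/E)=\app;\subst P0{\delta_1@E}$, and the decompiled argument stack begins with $\delta_1 e::\delta_1(Q/F)::\delta_1@V_0$, where $\delta_1(Q/F)=\subst Q1{\delta_1@F}$. This matches the left-hand side of the naive machine's application rule, which therefore steps to the control stack $\subst{(\subst Q1{\delta_1@F})}0{\delta_1 e}::\subst P0{\delta_1@E}::\delta_0@T_0$ over the argument stack $\delta_1@V_0$. Computing the target $(\delta_0@T',\delta_1@V')$, the argument stacks agree immediately since $V'=V_0$, and the control-stack tails $\subst P0{\delta_1@E}::\delta_0@T_0$ coincide as well, so the entire claim reduces to the single-program identity
\[
  \subst{(\subst Q1{\delta_1@F})}0{\delta_1 e}~=~\subst Q0{\delta_1 e::\delta_1@F},
\]
whose right-hand side is $\delta_0(Q/e::F)=\subst Q0{\delta_1@(e::F)}$.

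This identity is exactly Fact~\ref{fact-par-subst}\,(4) instantiated at level $k=0$ (so $\natS 0=1$), which is the bridge between the parallel substitution used by $\delta_0$ and $\delta_1$ and the single substitution performed by the naive machine's application rule. I expect discharging its side condition $\delta_1@F\bnd1$ to be the only real obstacle, and this is where the closedness hypothesis is needed: from $\M{closed}\,(T,V)$ we get $V\bnd1$, hence $(Q/F)\bnd1$, which unfolds to $F\bnd1$; Fact~\ref{fact-closure-subst} then sends each $e'\in F$ with $e'\bnd1$ to $\delta_1 e'\bnd1$, so $\delta_1@F\bnd1$. With the side condition in hand the substitution lemma applies, the two states coincide, and the $\beta$-simulation follows.
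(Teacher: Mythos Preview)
Your proposal is correct and follows exactly the route the paper takes: the paper's proof is simply ``Follows with Facts~\ref{fact-par-subst}\,(4) and~\ref{fact-closure-subst}'', and your argument unfolds precisely these two ingredients---the parallel/single substitution identity at $k=0$ for the head of the control stack, and the bound-preservation lemma to discharge the side condition $\delta_1@F\bnd1$ from the closedness hypothesis.
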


\begin{proof}
  Follows with Facts~\ref{fact-par-subst}\,(4) 
  and~\ref{fact-closure-subst}.
\qed\end{proof}

\begin{theorem}[Closure Machine to Naive Stack Machine][clos_stack_refinement]
  \label{theo-closure-stack}
  The relation
  \begin{align*}
    (T,V)\gg\sigma
    &~:=~\M{closed}\,(T,V)~\land~
      (\delta_0@T,\delta_1@V)=\sigma
  \end{align*}
  is a functional and computable refinement.
  Moreover, 
  $([P/\nil],\nil)\gg([P],\nil)$
  holds for every closed program~$P$.
\end{theorem}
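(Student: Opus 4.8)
The plan is to assemble the theorem from the facts already established for the closure machine, exactly in the style of the proof of Theorem~\ref{theo-stack-L}. Here both the closure machine ($A$) and the naive stack machine ($B$) are machines, so $\gg$ must be shown to be a \emph{refinement $A$~to~$B$}; this is a strict simulation, requiring the three conditions (progress, $\tau$-step, $\beta$-step) of that definition, plus functionality and computability. Functionality and computability are already Fact~\ref{fact-closure-ref-fun-comp}, so only the three simulation conditions and the concrete instance remain.

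I would begin each condition by unfolding $(T,V)\gg\sigma$ into $\M{closed}\,(T,V)$ together with the identification $\sigma=(\delta_0@T,\delta_1@V)$; the latter is what lets the simulation facts fire. For progress, reducibility of $\sigma=(\delta_0@T,\delta_1@V)$ yields reducibility of $(T,V)$ by Fact~\ref{fact-closure-red}. For a $\tau$-step $(T,V)\tred(T',V')$, Fact~\ref{fact-closure-tau} gives $\sigma\tred(\delta_0@T',\delta_1@V')$; for a $\beta$-step $(T,V)\bred(T',V')$, the closedness packaged in $(T,V)\gg\sigma$ discharges the hypothesis of Fact~\ref{fact-closure-beta}, giving $\sigma\bred(\delta_0@T',\delta_1@V')$. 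In both cases the simulating state is $\sigma':=(\delta_0@T',\delta_1@V')$.

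To close conditions~(2) and~(3) I must still check that the target pair lies in the relation, i.e.\ $(T',V')\gg\sigma'$. By definition the decompilation part holds by the choice of $\sigma'$, so this reduces to $\M{closed}\,(T',V')$, which is precisely Fact~\ref{fact-closure-closed} (closedness is preserved along any step). This propagation of closedness is the only bookkeeping beyond a single simulation fact. The genuinely substantive content — the match between the $\beta$-rule of the closure machine and that of the naive stack machine — has already been absorbed into Fact~\ref{fact-closure-beta} and the parallel-substitution machinery it rests on (Fact~\ref{fact-par-subst}\,(4), relating parallel to single substitution), so at the level of this theorem there is no real obstacle; it is a routine combination.

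For the concrete instance $([P/\nil],\nil)\gg([P],\nil)$ with $P$ closed, I unfold the refinement relation. Closedness of $([P/\nil],\nil)$ reduces to $(P/\nil)\bnd0$, i.e.\ $P\bnd|\nil|=P\bnd0$, which is the hypothesis that $P$ is closed, the empty environment and empty argument stack being vacuously bounded. For the decompilation, $\delta_1@\nil=\nil$ and $\delta_0(P/\nil)=\subst P0{\delta_1@\nil}=\subst P0\nil=P$ by Fact~\ref{fact-par-subst}\,(1), hence $(\delta_0@[P/\nil],\delta_1@\nil)=([P],\nil)$, as required.
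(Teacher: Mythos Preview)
Your proof is correct and follows exactly the paper's approach: you invoke Facts~\ref{fact-closure-ref-fun-comp}, \ref{fact-closure-red}, \ref{fact-closure-closed}, \ref{fact-closure-tau}, and~\ref{fact-closure-beta} for the refinement properties, and Fact~\ref{fact-par-subst}\,(1) for the concrete instance, which is precisely what the paper does. Your write-up is simply more explicit about how the pieces fit together (in particular the role of closedness preservation in re-establishing $(T',V')\gg\sigma'$), but the underlying argument is identical.
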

\begin{proof}
  The first claim follows with Facts
  \ref{fact-closure-ref-fun-comp}
  \ref{fact-closure-red},
  \ref{fact-closure-closed}, 
  \ref{fact-closure-tau}, 
  and~\ref{fact-closure-beta}.
  The second claim follows with 
  Fact~\ref{fact-par-subst}\,(1).
  \qed\end{proof}

Note that Theorems~\ref{theo-stack-L}
  and~\ref{theo-closure-stack}
  Facts~\ref{fact-ref-comp}
  and~\ref{fact-gamma-bnd} yield a refinement \coqlink[clos_L_refinement]{to L}.

  

\section{Codes}
\label{sec:codes}
\setCoqFilename{Codes}

If a state is reachable from an initial state in the closure machine,
all its programs are subprograms of programs in the initial state.
We can thus represent programs as addresses of a fixed code, providing
structure sharing for programs.

A code represents a program such that
the commands and subprograms of the program
can be accessed through addresses.
We represent codes abstractly
with a type $\Code$,
a type $\PA$ of program addresses,
and two functions $\#$ and $\phi$ as follows:
\begin{align*}
  \N{C}
  &~:~\N{\Code}
  &\text{\coqemph[Code]{code}}  
  \\
  \N{p},\N{q},\N{r}
  &~:~\N{\PA}
  &\text{\coqemph[PA]{program address}}
  \\
  \N{\#}
  &~:~\PA\to\PA
  \\
  &\hskip5mm\coqlink[Com]{\N{\Com}}~:=~\ret\mid\var\,n\mid\lamb\,p\mid\app
  &\text{\coqemph[Com]{command}}
  \\
  \N{\phi}
  &~:~\Code\to\PA\to\opt{(\Com)}
\end{align*}
Note that commands are obtained with
a nonrecursive inductive type $\Com$.
The function $\#$ increments a program address,
and the  $\phi$ yields the command 
for a valid program address.
We will use the notation $\N{C[p]}:=\phi Cp$.
We fix the semantics of codes with a relation \coqemph[representsPro]{$p\gg_CP$} relating program addresses with programs:
\begin{mathpar}
  \inferrule*
  {C[p]=\some{\ret}}
  {p\gg_C\ret}
  \and
  \inferrule*
  {C[p]=\some{\var\,n}\\\#p\gg_CP}
  {p\gg_C\var\,n;P}
  \\
  \inferrule*
  {C[p]=\some{\lamb\,q}\\q\gg_CQ\\\#p\gg_CP}
  {p\gg_C\lamb\,Q;P}
  \and
  \inferrule*
  {C[p]=\some{\app}\\\#p\gg_CP}
  {p\gg_C\app;P}
\end{mathpar}

\begin{fact}[][representsPro_functional]\label{fact-code-rep-fun}
  The relation $p\gg_C P$ is functional.
\end{fact}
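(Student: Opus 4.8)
The plan is to prove functionality by induction on the derivation of $p\gg_C P$ and rule inversion on the competing derivation $p\gg_C P'$. The essential observation is that the code lookup $C[p]=\phi\,C\,p$ is a function, so for a fixed program address $p$ the value $C[p]:\opt(\Com)$ is uniquely determined. Since the four defining rules of $\gg_C$ are guarded by the four distinct command shapes $\some{\ret}$, $\some{\var\,n}$, $\some{\lamb\,q}$, and $\some{\app}$, at most one rule can fire at a given $p$: the value of $C[p]$ alone selects the applicable rule. This mutual exclusion drives the entire argument.

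Concretely, I would fix $C$ and proceed by induction on $p\gg_C P$, generalising over $P'$ so that the statement being proved reads $\forall P'.\,p\gg_C P'\to P=P'$ and the inductive hypotheses are available for each recursive premise. In the base case $p\gg_C\ret$ we have $C[p]=\some{\ret}$; inverting $p\gg_C P'$ forces $P'=\ret$, since the other three rules would require $C[p]$ to be one of $\some{\var\,n}$, $\some{\lamb\,q}$, or $\some{\app}$. In the variable case $p\gg_C\var\,n;P$ we have $C[p]=\some{\var\,n}$, so inversion of $p\gg_C P'$ yields $P'=\var\,n';P''$ with $C[p]=\some{\var\,n'}$ and $\#p\gg_C P''$; injectivity of $\some$ and of the $\var$ constructor gives $n=n'$, and the inductive hypothesis for $\#p\gg_C P$ applied to $\#p\gg_C P''$ gives $P=P''$. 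The application case is identical with $\app$ in place of $\var\,n$.

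The only case with two recursive premises is the lambda case $p\gg_C\lamb\,Q;P$, where $C[p]=\some{\lamb\,q}$, $q\gg_C Q$, and $\#p\gg_C P$. Inverting $p\gg_C P'$ gives $P'=\lamb\,Q';P''$ with $C[p]=\some{\lamb\,q'}$, $q'\gg_C Q'$, and $\#p\gg_C P''$, and injectivity yields $q=q'$. Here I apply the inductive hypothesis twice: once for $q\gg_C Q$ against $q\gg_C Q'$ to obtain $Q=Q'$, and once for $\#p\gg_C P$ against $\#p\gg_C P''$ to obtain $P=P''$, whence $P'=\lamb\,Q;P$.

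I do not expect a genuine obstacle: this is a routine rule-inversion proof. The only points requiring care are the bookkeeping of the two independent inductive hypotheses in the lambda case and the systematic use of the injectivity of $\some$ together with the constructors of $\Com$ to convert equalities of lookups into equalities of their arguments. Everything else follows mechanically from the functionality of $\phi$.
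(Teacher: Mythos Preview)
Your proposal is correct and is exactly the natural argument one expects: induction on the derivation of $p\gg_C P$ with inversion on the competing derivation, exploiting that $C[p]$ is a function and that the four rules are syntactically guarded by mutually exclusive command shapes. The paper does not spell out a proof for this fact (it is stated without proof and delegated to the Coq development), so there is nothing to compare against beyond noting that your argument is the routine one the authors evidently had in mind.
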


We obtain one possible implementation of codes
as follows: \todo{intuition $n+k$}
\begin{align*}
  \PA
  &~:=~\nat
  &&&&&  \phi Cn
  &~:=~\lamb\,(n+k)
  &&\text{if}~C[n]=\lamb\,k
  \\
  \Code
  &~:=~\List(\Com)
  &&&&&  \phi Cn
  &~:=~C[n]
  &&\text{otherwise}
  \\
  \#n
  &~:=~\natS\,n
\end{align*}
For this realisation of codes we define
a function $\coqlink[psi]{\N{\psi}}:\Pro\to\List(\Com)$
compiling programs into codes as follows:
\begin{align*}
  \psi~\ret
  &~:=~[\ret]
  &\psi(\lamb\,Q;P)
  &~:=~\lamb\,(\natS|\psi P|)::\psi P\con\psi Q 
  \\
  \psi(\var\,n;P)
  &~:=~\var\,n::\psi P
  &\psi(\app;P)
  &~:=~\app::\psi P
\end{align*}
The linear representation of a program $\lamb\,Q;P$ 
provided by $\psi$ is as follows:
First comes a command $\lamb\,k$, then the commands for $P$, 
and finally the commands for $Q$ 
(i.e., the commands for the body $Q$ 
come after the commands for the continuation~$P$).  
The number $k$ of the command $\lamb\,k$ 
is chosen such that $n+\natS k$ 
is the address of the first command for $Q$
if $n$ is the address of the command $\lamb\,k$.

\begin{fact}[][fetch_correct]\label{fact-code-correct}
  $|C_1|\gg_{C_1\con\psi P\con C_2}P$. In particular, $0\gg_{\psi P}P$.
\end{fact}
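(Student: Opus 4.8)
The plan is to prove the generalized equation $|C_1|\gg_{C_1\con\psi P\con C_2}P$ by structural induction on $P$, quantifying over \emph{both} the prefix $C_1$ and the suffix $C_2$ so that the inductive hypotheses become available for arbitrary prefixes and suffixes. Note that the address under consideration is $|C_1|$, so generalizing over the prefix automatically generalizes over the address at which $\psi P$ is read. The stated special case $0\gg_{\psi P}P$ then follows by taking $C_1=C_2=\nil$, since $|\nil|=0$ and $\nil\con\psi P\con\nil=\psi P$.

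Before the induction I would isolate one fact about the concrete code implementation: a list-lookup lemma stating that $(C_1\con(c::R))[|C_1|]=\some{c}$ for every prefix $C_1$, command $c$, and tail $R$ (by induction on $C_1$). Since $\psi P$ is always nonempty and its head command is the one for the outermost constructor of $P$, this pins down the underlying list lookup of $C:=C_1\con\psi P\con C_2$ at position $|C_1|$. Feeding this through the definition of $\phi$ then determines $C[|C_1|]=\phi C|C_1|$: a $\ret$, $\var\,n$, or $\app$ head command is returned unchanged, whereas a raw head command $\lamb\,k$ at position $|C_1|$ is returned as $\lamb\,(|C_1|+k)$.

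In each case I would unfold $\psi P$, apply the matching constructor of $p\gg_CP$, and discharge its premises with the lookup observation and the inductive hypotheses at re-associated codes. For $P=\ret$ the head command is $\ret$ and there is nothing left to prove; for $P=\var\,n;P'$ and $P=\app;P'$ the head command is read off directly, and the continuation premise $\#|C_1|\gg_CP'$ is the inductive hypothesis for $P'$ with the prefix enlarged to $C_1\con[c]$ (where $c$ is the head command), using $|C_1\con[c]|=\natS|C_1|=\#|C_1|$ together with associativity of $\con$ to recognize the enlarged code as $C$ again.

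The lambda case $P=\lamb\,Q;P'$ is the main obstacle and is exactly where the offset chosen by $\psi$ is needed. Here $\psi P=\lamb\,(\natS|\psi P'|)::\psi P'\con\psi Q$, so the lookup observation together with $\phi$'s adjustment gives $C[|C_1|]=\some{\lamb\,q}$ with $q=|C_1|+\natS|\psi P'|$. I would supply this $q$ to the lambda constructor and discharge its two premises: the continuation premise $\#|C_1|\gg_CP'$ is the inductive hypothesis for $P'$ with prefix $C_1\con[\lamb\,(\natS|\psi P'|)]$ and suffix $\psi Q\con C_2$, while the body premise $q\gg_CQ$ is the inductive hypothesis for $Q$ with prefix $C_1\con[\lamb\,(\natS|\psi P'|)]\con\psi P'$ and suffix $C_2$. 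The crux is the arithmetic identity $|C_1\con[\lamb\,(\natS|\psi P'|)]\con\psi P'|=|C_1|+\natS|\psi P'|=q$, which certifies that the absolute address synthesized by $\phi$ coincides with the starting position of $\psi Q$ in $C$; once this and the associativity rearrangements are checked, both premises are exactly instances of the inductive hypotheses.
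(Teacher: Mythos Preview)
Your proposal is correct and follows the same approach as the paper, which records only that ``the generalisation follows by induction on $P$''; you have simply spelled out the case analysis and the bookkeeping (the list-lookup lemma, the associativity rewrites, and the arithmetic identity in the $\lamb$ case) that the paper leaves implicit.
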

\makeproof{fact-code-correct}{
  The generalisation 
  follows by induction on $P$.}

\section{Heaps}
\label{sec:heaps}
\setCoqFilename{Heaps}

A heap contains environments accessible through addresses.
This opens the possibility to share
the representation of environments.

We model heaps abstractly 
based on an assumed code structure.
We start with types for heaps 
and heap addresses and a function $\M{get}$ 
accessing heap addresses:
\begin{align*}
  \N{H}
  &~:~\N{\Heap}
  &\text{\coqemph[Heap]{heap}}
  \\
  \N{a},\N{b},\N{c}
  &~:~\N{\HA}
  &\text{\coqemph[HA]{heap address}}
  \\
  \N{g}
  &~:~\N{\HC}~:=~\PA\times\HA
  &\text{\coqemph[HC]{heap closure}}
  \\
  &\hskip5mm\N{\HE}~:=~\opt(\HC\times\HA)
  &\text{\coqemph[HE]{heap environment}}
  \\
  \N{\M{get}}
  &~:~\Heap\to\HA\to\opt(\HE)
\end{align*}
We will use the notation $\N{H[a]}:=\M{get}\,H\,a$.
We fix the semantics of heaps with an
inductive relation \coqemph[representsEnv]{$a\gg_HE$} 
relating heap addresses with environments:
\begin{mathpar}
  \inferrule*
  {H[a]=\some{\none}}
  {a\gg_H\nil}
  \and
  \inferrule*
  {H[a]=\some{\some{((p,b),c)}}\\
    p\gg_CP\\
    b\gg_HF\\
    c\gg_HE}
  {a\gg_H(P/F)::E}
\end{mathpar}

\begin{fact}[][representsEnv_functional]\label{fact-heap-rep-fun}
  The relation $a\gg_H E$ is functional.
\end{fact}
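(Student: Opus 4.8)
The plan is to prove functionality by induction on the derivation of the first judgment $a\gg_H E$, inverting the second judgment $a\gg_H E'$ in each case. The guiding observation is that the value of $H[a]$---a fixed element of $\opt(\HC\times\HA)$---determines uniquely which of the two inference rules could have produced a derivation of $a\gg_H(\cdot)$: the nil rule requires $H[a]=\some{\none}$, whereas the cons rule requires $H[a]=\some{\some{((p,b),c)}}$. Hence, once $H[a]$ is fixed, both derivations must end with the same rule, and the emitted closure/continuation addresses agree.

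First I would treat the nil case, where $H[a]=\some{\none}$ and $E=\nil$. Inverting $a\gg_H E'$, the cons rule is inapplicable because it demands $H[a]$ of the form $\some{\some{\cdots}}$; so only the nil rule applies and $E'=\nil=E$.

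Next I would treat the cons case, where $H[a]=\some{\some{((p,b),c)}}$ and $E=(P/F)::E_1$ with premises $p\gg_C P$, $b\gg_H F$, and $c\gg_H E_1$. Inverting $a\gg_H E'$ forces the cons rule, yielding $E'=(P'/F')::E_1'$ with $p\gg_C P'$, $b\gg_H F'$, $c\gg_H E_1'$; crucially the shared value $H[a]$ pins down the very same $p$, $b$, and $c$. From functionality of $\gg_C$ (Fact~\ref{fact-code-rep-fun}) I obtain $P=P'$, and the two induction hypotheses attached to the recursive premises $b\gg_H F$ and $c\gg_H E_1$ give $F=F'$ and $E_1=E_1'$. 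Assembling these yields $E=(P/F)::E_1=(P'/F')::E_1'=E'$.

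I expect no substantial obstacle. The only point worth care is that the cons rule has two recursive premises $b\gg_H F$ and $c\gg_H E$ descending to \emph{different} heap addresses $b$ and $c$; the induction must supply a hypothesis for each, which the eliminator of the inductive relation $\gg_H$ provides automatically. The argument mirrors the proof of Fact~\ref{fact-code-rep-fun}, the essential ingredient being that accessing the heap at a fixed address returns a determinate value that selects the applicable rule.
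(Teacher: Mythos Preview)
Your proposal is correct and is precisely the standard argument one would expect: induction on the derivation of $a\gg_H E$, inversion of the second derivation guided by the determinate value of $H[a]$, with an appeal to Fact~\ref{fact-code-rep-fun} for the program component and the two inductive hypotheses for the recursive environment premises. The paper itself states this fact without proof, so there is nothing to compare against; your write-up is exactly the proof the Coq development carries out.
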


We also need an operation
  $\N{\M{put}}~:~\Heap\to\HC\to\HA\to\Heap\times\HA$
extending a heap with an environment.
Note that $\M{put}$ yields the extended heap
and the address of the extending environment.
We use the notation
\begin{align*}
  \N{H\incl H'}&~:=~\forall a.~H[a]\neq\eset~\to~H[a]=H'[a]
\end{align*}
to say that $H'$ is an \emph{extension} of $H$.
We fix the semantics of $\M{put}$
with the following requirement:
\begin{description}
\item[{\coqemph[HR1]{HR}}]
  If $\M{put}\,H\,g\,a=(H',b)$,
  then $H'[b]=(g,a)$ 
  and $H\incl H'$.
\end{description}

\begin{fact}[][representsEnv_extend]\label{fact-heap-ext}
  If $H\incl H'$ and $a\gg_HE$, 
  then $a\gg_{H'}E$.
\end{fact}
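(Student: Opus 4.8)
The plan is to proceed by induction on the derivation of $a \gg_H E$. The predicate $\gg_H$ is defined by exactly two rules, and in each of them the first premise pins the lookup $H[a]$ to a concrete $\some$-value. The crucial observation will therefore be that this value is always distinct from $\eset$, so that the extension hypothesis $H \incl H'$ forces $H[a] = H'[a]$, allowing us to transport the lookup from $H$ to $H'$ and re-apply the same rule.

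In the base case the derivation ends with $H[a] = \some{\none}$ and $E = \nil$. Since $\some{\none} \neq \eset$, unfolding $H \incl H'$ at address $a$ yields $H'[a] = H[a] = \some{\none}$, which is precisely the premise needed to conclude $a \gg_{H'} \nil$ by the first rule.

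In the inductive case the derivation ends with $H[a] = \some{\some{((p,b),c)}}$ and $E = (P/F)::E_0$, with premises $p \gg_C P$, $b \gg_H F$, and $c \gg_H E_0$. The inductive hypotheses give $b \gg_{H'} F$ and $c \gg_{H'} E_0$, while the code-level premise $p \gg_C P$ is carried over unchanged since it refers to the code $C$ and not to the heap. As before, $\some{\some{((p,b),c)}} \neq \eset$, so the extension hypothesis gives $H'[a] = H[a]$, and reassembling the four premises with the second rule yields $a \gg_{H'} (P/F)::E_0$.

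I expect no genuine obstacle here: the single point that needs attention is recognising that both rules constrain $H[a]$ to a value that differs from $\eset$, which is exactly the side condition under which $H \incl H'$ equates the lookups in $H$ and $H'$. Once that is noted, each case is a direct reapplication of the corresponding constructor, using the inductive hypotheses for the recursive sub-environments and the heap-independence of $\gg_C$ for the code component.
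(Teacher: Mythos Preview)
Your proposal is correct and is precisely the natural argument: induction on the derivation of $a\gg_H E$, using in each case that the rule premise forces $H[a]$ to be a $\some$-value, so the extension hypothesis $H\incl H'$ transports the lookup to $H'$, after which the rule reapplies with the inductive hypotheses for the recursive heap premises and the unchanged code premise. The paper does not spell out a proof for this fact (it is left as a routine observation), but your induction on $a\gg_H E$ is exactly what the accompanying Coq development does.
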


We define a relation \coqemph[representsClos]{$g\gg_H e$}
relating heap closures with proper closures:
\begin{align*}
  \coqlink[representsClos]{\N{(p,a)\gg_H (P,E)}}&~:=~p\gg_C P\land a\gg_H E
\end{align*}

\begin{fact}[][representsClos_extend]\label{fact-heap-ext-closure}
   If $H\incl H'$ and $g\gg_{H} e$, 
   then $g\gg_{H'} e$.
\end{fact}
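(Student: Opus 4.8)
The final statement to prove is Fact~\ref{fact-heap-ext-closure}: if $H\incl H'$ and $g\gg_H e$, then $g\gg_{H'} e$.

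The plan is to unfold the definition of $g\gg_H e$ directly. Writing $g = (p,a)$ and $e = (P,E)$, the hypothesis $g\gg_H e$ is by definition the conjunction $p\gg_C P\land a\gg_H E$. Note that the first conjunct, $p\gg_C P$, concerns only the code $C$ and makes no reference to the heap at all, so it is unaffected by replacing $H$ with $H'$ and transfers immediately. The whole content of the claim therefore reduces to upgrading the second conjunct from $a\gg_H E$ to $a\gg_{H'} E$, that is, to proving exactly Fact~\ref{fact-heap-ext} under the assumption $H\incl H'$.

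So first I would destructure $g$ and $e$ into their components and split the conjunction. Second I would discharge the code-level conjunct $p\gg_C P$ verbatim from the hypothesis. Third, and this is the only substantive step, I would invoke Fact~\ref{fact-heap-ext}: since $H\incl H'$ and $a\gg_H E$, we obtain $a\gg_{H'} E$. Reassembling the two conjuncts gives $(p,a)\gg_{H'}(P,E)$, which is the goal. There is no induction to perform here, because all the inductive work has already been packaged into Fact~\ref{fact-heap-ext}.

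I expect no real obstacle: the statement is a straightforward component-wise monotonicity lemma whose entire weight rests on Fact~\ref{fact-heap-ext}. If anything counts as the main point, it is merely recognizing that the code component is heap-independent and thus needs no argument, so that the proof factors cleanly as \emph{code part trivial, environment part by Fact~\ref{fact-heap-ext}}. The proof is essentially a one-liner: unfold $\gg_H$ on closures and apply Fact~\ref{fact-heap-ext} to the environment address.
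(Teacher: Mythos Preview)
Your proposal is correct and matches the paper's own proof, which simply says ``Follows with Fact~\ref{fact-heap-ext}.'' You have merely spelled out the unfolding that this one-liner leaves implicit.
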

\makeproof{fact-heap-ext-closure}{
  Follows with Fact~\ref{fact-heap-ext}.
}

We define a 
\coqemph[lookup]{lookup function} $\N{H[a,n]}:\opt(\HC)$ 
yielding the heap closure appearing at position~$n$
of the heap environment designated by $a$ in $H$:
\begin{align*}
  H[a,0]&~:=~\some{(p,b)}
  &&\text{if}~H[a]:=\some{((p,b),c)}
  \\
  H[a,\natS\,n]&~:=~H[c,n]
  &&\text{if}~H[a]:=\some{((p,b),c)}
\end{align*}

\begin{fact}[][lookup_unlinedEnv]\label{fact-heap-lookup}
  Let $a\gg_H E$.  Then:
  \begin{enumerate}
  \coqitem[nth_error_unlinedEnv] If $E[n]=\some e$, 
    then $H[a,n]=\some g$ and $g\gg_H e$
    for some $g$.
  \coqitem[lookup_unlinedEnv] If $H[a,n]=\some g$, 
    then $E[n]=\some e$ and $g\gg_H e$
    for some $e$.
  \end{enumerate}
\end{fact}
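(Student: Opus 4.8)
The plan is to prove both parts together by induction on the derivation of $a\gg_H E$, and within the inductive case to distinguish $n=0$ from $n=\natS n'$. This is the natural structure because the lookup function $H[a,n]$ and the relation $\gg_H$ recurse through the \emph{same} tail heap address: unfolding one step of $H[a,n]$ reads $H[a]=\some{\some{((p,b),c)}}$ and continues at $c$, while inverting $a\gg_H E$ on a nonempty environment exposes exactly this entry together with a subderivation $c\gg_H E'$ for the tail. Thus the recursion of the function and the recursion of the relation stay in lockstep, and the subderivation supplies precisely the induction hypothesis needed at $c$. (Equivalently one could induct on $n$ after generalizing over $a$ and $E$.)

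In the base case the derivation is the nil rule, $H[a]=\some{\none}$ with $E=\nil$. Then $E[n]=\none$ for every $n$, and $H[a,n]=\none$ as well, since both the $0$-equation and the $\natS n$-equation for lookup require $H[a]$ to be a proper entry $\some{\some{((p,b),c)}}$. Hence both implications hold vacuously. In the inductive case the derivation gives $H[a]=\some{\some{((p,b),c)}}$, $p\gg_C P$, $b\gg_H F$, and $c\gg_H E'$ with $E=(P/F)::E'$. For $n=0$ we compute $E[0]=\some{(P/F)}$ and $H[a,0]=\some{(p,b)}$, and $(p,b)\gg_H(P/F)$ holds by definition from $p\gg_C P$ and $b\gg_H F$; this settles both directions at once. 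For $n=\natS n'$ we have $E[\natS n']=E'[n']$ and $H[a,\natS n']=H[c,n']$, so each claim follows by applying the induction hypothesis for the subderivation $c\gg_H E'$ at index $n'$.

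The argument is routine, so the only point that needs a little care is the inversion at the head. For part (1) we must observe that $E[n]=\some e$ forces $E$ to be a cons, ruling out the nil rule and exposing the entry at $a$; for part (2) the analogous observation is that $H[a,n]=\some g$ can only be defined when $H[a]$ is a proper entry, which again rules out the nil rule. In both directions inversion of the inductive definition already pins down the shape of $E$ from the value of $H[a]$, so one simply reads off the head $(p,b)$ and $P/F$ and the tail $c$, $E'$ and matches them against the two defining equations of $H[a,n]$; functionality of $\gg_H$ (Fact~\ref{fact-heap-rep-fun}) is not required here.
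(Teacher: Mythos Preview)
Your proof is correct. The paper's own proof is simply ``Both claims follow by induction on $n$,'' i.e.\ the alternative you already mention in your parenthetical remark; your induction on the derivation of $a\gg_H E$ with an inner case split on $n$ is the same argument organised the other way around, since in both versions one step unfolds $H[a]$, matches it against the head of $E$, and recurses at the tail address~$c$.
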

\makeproof{fact-heap-lookup}{
  Both claims follow by induction on $n$.
}

Here is one possible implementation of heaps:
\begin{align*}
  \HA&~:=~\nat\\
  \Heap&~:=~\List(\HC\times\HA)\\
  \M{get}~H\,0&~:=~\some\none\\
  \M{get}~H\,(\natS\,n)&~:=~\some{\some{(g,a)}}
    &\hskip-20mm\text{if}~H[n]=\some{(g,a)}\\
  \M{put}~H\,g\,a&~:=~(H\con[(g,a)],~\natS\,|H|)
\end{align*}
Note that with this implementation the address 0
represents the empty environment in every heap.

Given that Coq admits only structurally recursive functions,
writing a function computing $a\gg_HE$ is not straightforward.
The problem goes away if we switch to a step-indexed
function computing $a\gg_HE$.

\section{Heap Machine}
\setCoqFilename{M_heap}

The heap machine refines the closure machine
by representing programs as addresses into a fixed code
and environments as addresses into heaps that 
reside as additional component in the states
of the heap machine.

We assume a code structure 
providing types $\Code$ and $\PA$,
a code $C:\Code$,
and a heap structure 
providing types $\Heap$ and $\HA$.
\coqemph[state]{States of the heap machine} are triples
\begin{align*}
  \N{(T,V,H)}&~:~\List(\HC)\times\List(\HC)\times\Heap
\end{align*}
consisting of a control stack, an argument stack, and a heap.
The \coqemph[stepH]{reduction rules of the heap machine}
appear in Figure~\ref{fig:heap-red}.
They refine the reduction rules of the closure machine 
as one would expect.

\begin{figure}[t]
  \begin{align*}
    (p,a)::T,~V,~H
    &~\tred~T,\,V,\,H
    &&\text{if}~C[p]=\some{\ret}
    \\
    (p,a)::T,~V,~H
    &~\tred~(\#p,a)::T,~g::V,~H
    &&\text{if}~C[p]=\some{\var\,n}
    \\&&&\text{and}~H[a,n]=\some{g}
    \\
    (p,a)::T,~V,~H
    &~\tred~(\#p,a)::T,~(q,a)::V,~H
    &&\text{if}~C[p]=\some{\lamb\,q}
    \\
    (p,a)::T,~g::(q,b)::V,~H
    &~\bred~(q,c)::(\#p,a)::T,~V,~H'
    &&\text{if}~C[p]=\some{\app}
    \\&&&\text{and}~\M{put}\,H\,g\,b=\some{(H',c)}
  \end{align*}
  \vspace{-9mm} 
  \caption{Reduction rules of the heap machine}
  \label{fig:heap-red}
\end{figure}

Note that the application rule is
the only rule that allocates
new environments on the heap.
This is at first surprising 
since with practical machines (e.g., FAM and ZINC) 
heap allocation takes place 
when lambda commands are executed.
The naive allocation policy of 
our heap machine is a consequence
of the naive realisation of the lambda command
in the closure machine, 
which is common in formalisations of the SECD machine.
Given our refinement approach,
smart closure allocation would be prepared 
at the level of the naive stack machine
with programs that have explicit commands for
accessing and constructing closure environments.

Proving correctness of the heap machine is straightforward:

\begin{theorem}[Heap Machine to Closure Machine][heap_clos_refinement]
  \label{theo-heap-closure}
  Let a code structure, a code $C$, 
  and a heap structure be fixed.
  Let $T\gg_H\dot T$ and $V\gg_H\dot V$
  denote the pointwise extension of $g\gg_HE$ to lists.
  Then the relation
  \begin{align*}
    (T,V,H)\gg(\dot T,\dot V)
    &~:=~T\gg_H\dot T~\land~V\gg_H\dot V
  \end{align*}
  is a functional refinement.
  Moreover, 
  $([(p,a)],\nil,H)\gg([P/\nil],\nil)$
  for all $p$, $a$, $H$, and $P$ such that
  $p\gg_C P$ and $a\gg_H\nil$.
\end{theorem}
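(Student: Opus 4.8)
The plan is to verify the four conditions of the machine-refinement definition (Definition~\ref{refinement_M}) for the relation $(T,V,H)\gg(\dot T,\dot V)$, together with functionality and the initial-state claim. Since this is the third refinement step and the paper already flagged it as ``straightforward'' (strict simulation, no silent-step bookkeeping), I expect each condition to reduce to a direct case analysis on the reduction rules, mirroring the pattern used for the closure machine in Theorem~\ref{theo-closure-stack}. The central technical tool throughout is the representation relation $g\gg_H e$ together with its extension lemmas, Facts~\ref{fact-heap-ext}, \ref{fact-heap-ext-closure}, and the lookup correspondence Fact~\ref{fact-heap-lookup}.

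\textbf{Functionality.} First I would establish that $\gg$ is functional. This follows pointwise from Fact~\ref{fact-heap-rep-fun} (functionality of $a\gg_H E$) lifted through $g\gg_H e$ to the list level: if $(T,V,H)\gg(\dot T,\dot V)$ and $(T,V,H)\gg(\dot T',\dot V')$, then $T\gg_H\dot T$ and $T\gg_H\dot T'$ force $\dot T=\dot T'$ by induction on the lists using functionality of the closure representation, and similarly for $V$.

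\textbf{Simulation conditions (2)--(3) and progress (1).} For each of the four heap-machine rules in Figure~\ref{fig:heap-red} I would exhibit the matching closure-machine rule. The return and lambda rules are pure $\tau$-steps and transport directly: given $(p,a)::T\gg_H(P/E)::\dot T$, the code relation $p\gg_C P$ determines the head command $C[p]$, so the decompiled closure machine takes the corresponding step and the tails are unchanged (return) or extended with $(q,a)\gg_H(Q/E)$, which holds because $\#p\gg_C Q$-continuation data and $a\gg_H E$ are read off the inversion of $p\gg_C P$. The variable rule is the one relying on Fact~\ref{fact-heap-lookup}\,(1): from $E[n]=\some e$ on the closure side one obtains $H[a,n]=\some g$ with $g\gg_H e$, matching the heap-machine side condition, so the pushed heap closure refines the pushed closure. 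Condition~(1), reducibility propagation, is the converse direction and uses Fact~\ref{fact-heap-lookup}\,(2): if the decompiled state is reducible, inverting $\gg$ shows the heap state can fire the same-shaped rule (the lookup and $\M{put}$ side conditions are always satisfiable, the latter by \textbf{HR}).

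\textbf{The main obstacle} will be the application ($\beta$) rule, the only rule that allocates. Here $(p,a)::T,\,g::(q,b)::V,\,H$ reduces to $(q,c)::(\#p,a)::T,\,V,\,H'$ where $\M{put}\,H\,g\,b=(H',c)$, and I must show this refines the closure-machine $\beta$-step producing $(Q/e::F)::(P/E)::T$. Two things must be checked, both turning on the extension ordering $H\incl H'$ from \textbf{HR}: first that the freshly allocated address satisfies $c\gg_{H'}(e::F)$, which unfolds to $H'[c]=(g,b)$ (given by \textbf{HR}) together with $g\gg_{H'}e$ and $b\gg_{H'}F$; and second that every \emph{other} refinement fact in the old state survives the heap extension. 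The first uses $g\gg_H e$ and $b\gg_H F$ promoted along $H\incl H'$ via Facts~\ref{fact-heap-ext-closure} and~\ref{fact-heap-ext}. The second is the genuinely new worry compared with the closure machine: the remaining control and argument stacks $T,V$ refine under $H$, and I must re-establish that they refine under $H'$ — again exactly Fact~\ref{fact-heap-ext-closure} applied pointwise, whose hypothesis is monotonicity of $\M{get}$ on defined addresses guaranteed by \textbf{HR}. Finally the initial-state claim $([(p,a)],\nil,H)\gg([P/\nil],\nil)$ reduces to $(p,a)\gg_H(P/\nil)$, i.e.\ $p\gg_C P$ and $a\gg_H\nil$, which are precisely the stated hypotheses. \qed
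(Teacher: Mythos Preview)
Your approach is correct and mirrors the paper's own proof, which simply cites Facts~\ref{fact-code-rep-fun}, \ref{fact-heap-rep-fun}, \ref{fact-heap-ext}, \ref{fact-heap-ext-closure}, and~\ref{fact-heap-lookup} and calls the rest straightforward; you have supplied exactly the case analysis those citations abbreviate. One small slip: you have the two halves of Fact~\ref{fact-heap-lookup} interchanged. For the bottom-up simulation (conditions~(2) and~(3)) you start from a heap-machine variable step, so the datum is $H[a,n]=\some g$ and you must produce $E[n]=\some e$ with $g\gg_H e$---that is part~(2); conversely, progress (condition~(1)) starts from a reducible closure state, giving $E[n]=\some e$, and you must exhibit $H[a,n]=\some g$---that is part~(1). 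You invoke both parts, just in the opposite roles.
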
  
\begin{proof}
  Follows with Facts~\ref{fact-code-rep-fun},
  \ref{fact-heap-rep-fun},
  \ref{fact-heap-ext},
  \ref{fact-heap-ext-closure},
  and~\ref{fact-heap-lookup}.
  Straightforward.
  \qed\end{proof}

Using the refinement from the closure machine to L, Theorem~\ref{theo-heap-closure}
and Fact~\ref{fact-ref-comp} we obtain a refinement from the Heap Machine
to L.
If
we instantiate the heap machine with
the realisation of codes from Section~\ref{sec:codes}
and the realisation of heaps from Section~\ref{sec:heaps}
we obtain a function compiling closed terms into initial states.
Moreover, given a function computing $a\gg_HE$,
we can obtain a decompiler 
for the states of the heap machine.
\section{Final Remarks}

The tail call optimisation can
be realised in our machines and 
accommodated in our verifications.
For this subprograms $\app;\ret$ are
executed such that no trivial continuation
(i.e., program $\ret$) is pushed on the control stack.

The control stack may be merged with the argument stack.
If this is done with explicit frames as in the SECD machine,
adapting our verification should be straightforward.
There is also the possibility to leave frames implicit
as in the modern SECD machine.
This will require different decompilation functions
and concomitant changes in the verification.

We could also switch to a 
$\lambda$-calculus with full substitution.
This complicates the definition of substitution
and the basic substitution lemmas but has the
pleasant consequence that we can drop 
the closedness constraints coming with
the correctness theorems  for the closure and heap machines.
The insight here is that a closure machine 
implements full substitution.
With full substitution we may reduce
\text{$\beta$-redexes} where the argument is a variable
and show a substitutivity property for small-step reduction.\todo{cleanup citations?}

\bibliography{bib}
\bibliographystyle{abbrv}

\end{document}